\theoremstyle{plain}
\newtheorem{theo}{Theorem}[section]
\newtheorem{lemm}[theo]{Lemma}
\newtheorem{prop}[theo]{Proposition}
\newtheorem{coro}[theo]{Corollary}
\theoremstyle{definition}
\newtheorem{defi}[theo]{Definition}
\theoremstyle{remark}
\newtheorem{remark}[theo]{Remark}
\newcommand{\compcent}[1]{\vcenter{\hbox{$#1\circ$}}}
\newcommand{\comp}{\mathbin{\mathchoice
{\compcent\scriptstyle}{\compcent\scriptstyle}
{\compcent\scriptscriptstyle}{\compcent\scriptscriptstyle}}} 
\newcommand{\inner}  [2]{\langle #1 , #2 \rangle}
\newcommand{\binner} [2]{\big\langle #1 , #2 \big\rangle}
\newcommand{\Svn}{S_{\rm vN}}
\newcommand{\Mob}{{\rm M\ddot{o}b}}
\newcommand{\Diff}{{\rm Diff}_+(S^1)}
\def\RR{{\mathbb R}}
\def\CC{{\mathbb C}}
\def\NN{{\mathbb N}}
\def\A{{\mathcal A}}
\def\B{{\mathcal B}}
\def\H{{\mathcal H}}
\def\K{{\mathcal K}}
\def\R{{\mathcal R}}
\def\f{\varphi}
 \def\l{\lambda}
\def\R{{\mathcal R}}
\def\s{\sigma}
\def\Ad{{\hbox{\rm Ad\,}}}
\def\dim{\mathrm{dim}\,}
 \def\1{{\mathbbm 1}}
 \def\<{\langle}
 \def\>{\rangle}
\DeclareMathOperator{\Tr}{Tr}
 \newcommand{\im}{\mathrm{Im}\,}
\title{Towards entanglement entropy with UV-cutoff in conformal nets}
\date{}
\author{
{\bf Yul Otani} \\
Graduate School of Mathematical Sciences, The University of Tokyo\\
3-8-1 Komaba Meguro-ku Tokyo 153-8914, Japan.\\
 email: {\tt otaniyul@gmail.com}
   \vspace{0.5cm}\\
   {\bf Yoh Tanimoto} \\%\footnote{Supported by the JSPS overseas research fellowship.}\\
   Dipartimento di Matematica, Universit\'a di Roma Tor Vergata\\
   Via della Ricerca Scientifica, 1, I-00133 Roma, Italy\\
   email: {\tt hoyt@mat.uniroma2.it}
}
\begin{document}
\maketitle
\begin{abstract}
We consider the entanglement entropy for a spacetime region and its spacelike complement
in the framework of algebraic quantum field theory.
For a M\"obius covariant local net satisfying either a certain nuclearity property or the split property,
we consider the von Neumann entropy for type I factors between local algebras
and introduce an entropic quantity.
Then we implement a cutoff on this quantity with respect to the conformal Hamiltonian
and show that it remains finite as the distance of two intervals tends to zero.
We compare our definition to others in the literature.
\end{abstract}
\section{Introduction}\label{sec:intro}

Recently, entanglement entropy in quantum field theory has been of remarkable interest in various context,
e.g.\! in relation with the black hole entropy \cite{bkls86, CW94, srednicki93} and holographic principle \cite{ryutak06}.
Concrete formulas for the entanglement entropy in various models have been conjectured and their behaviors have been
investigated \cite{cashue09}.
Actually, the definition of entanglement entropy itself is not straightforward.
Let us look at this issue closely.

In quantum mechanics, one considers a system on a Hilbert state $\H$ composed of
the subsystem $1$ with Hilbert space $\H_1$ and its complementary subsystem $2$ with Hilbert space $\H_{2}$,
so that $\H = \H_1\otimes \H_{2}$.
Then one takes a state $\f$ on the full system $\B(\H)$ and restricts it to the subalgebra
$\B(\H_1) \cong \B(\H_1)\otimes \CC\1_{\H_{2}}$.
This corresponds to taking the global density matrix $\rho_\f$ and ``tracing out'' the inaccessible degrees of freedom
of the subsystem $2$ with the partial trace $\Tr_{\H_{2}}(\rho_\f)$.
The von Neumann entropy of the resulting reduced density matrix is then called
the \emph{entanglement entropy} of the global state with respect to the subsystem $1$, namely,
\[
  S_1(\rho) := \Svn\left(\Tr_{\H_{2}}(\rho_\f)\right).
\]

In quantum field theory (QFT), defined on $\RR^d$, it is natural to consider the division of the whole system
into a spacetime region $O$ and its spacelike complement $O'$.
However, the corresponding factorization of the Hilbert space $\H = \H_O\otimes\H_{O'}$ does not exist, even
for the free fields: the local algebras of observables are not of type I in the sense of von Neumann
(see \cite{Araki64} for the free fields, \cite{bdf87} for a class of QFT with a nuclearity condition
and \cite{BGL93} for general conformal field theory),
and partial trace does not make sense.
Now, whereas states on type I algebras may have finite entropy, a
reasonable generalization of entropy is infinite for \emph{any} state on
an algebra of type III$_1$.  It should be stressed that this
divergence is different from the conventional explanation that the
entanglement entropy of a state is infinite because of increasing
contribution of fluctuations of high-energy states and therefore needs
an UV-cutoff. Hence, we need to deal with two issues: non-type I local
algebras and UV-divergence.

In physics literature, this is done by the lattice regularization:
then the local algebras are of type I (often just finite dimensional), and
there are not too many high-energy states, and indeed the entanglement entropy of the ground state
has been estimated in several cases and some exact results have been claimed (see \cite{calcar04} for an overview of results).

From a mathematical point of view, however, the lattice regularization is not completely satisfactory.
For one thing, there may be many different lattice regularizations for a given quantum field theory,
and it is unclear how intrinsic the value obtained in a particular regularization is.
For another, situations of interest in various contexts, e.g.\! of black hole entropy, computations on lattices are not
quite available. Therefore, we would like a direct approach to entanglement entropy based on the continuum.

The notion of entropy is most clearly presented in terms of operator algebras (see e.g.\! \cite{ohyapetz}),
hence for this purpose, the operator-algebraic approach, or algebraic QFT \cite{haag,araki} seems to be the most appropriate.
The main ingredient of the theory is a net of von Neumann algebras $\{\A(O)\}$ indexed by (bounded) regions of the spacetime,
of which each element $\A(O) \subset \B(\H)$ is the algebra generated by the observables which can be measured in $O$,
and they are subject to several conditions called the Haag-Kastler axioms.
The entanglement of the vacuum state has been studied in this framework since the discovery of Bell's inequality
\cite{SW85, SW87} and related with basic properties such as the Reeh-Schlieder property, see \cite{Yngvason15} for a review.
On the other hand, defining entropy in QFT is already a nontrivial task even for the free field, hence the usual caveat
that no interacting QFT in the physical four dimensions has been constructed in the Haag-Kastler framework is not a main concern in this respect.

As remarked above, local algebras in AQFT are not of type I, hence the notion of entanglement entropy must be
treated with care. On the other hand, let us point out that in many examples of Haag-Kastler nets,
the split property holds: there are type I algebras between local algebras of two regions, one included
in the other with finite distance. And the split property is assured when a nuclearity condition holds.
A nuclearity condition, although there are several variations, roughly says that ``the number of states''
in a region with restricted energy is small. It should be noted that ``the number of states'' is not a
precise expression, but the first of such notions, the compactness condition, has been formulated
based on this idea \cite{haaswi65} and various nuclearity conditions put certain quantitative restriction
on the state space. It is natural to expect that they are useful in the consideration of entanglement entropy with
UV-cutoff.

Indeed, \cite{narnhofer94}, Narnhofer
gave a slightly altered definition of localized entropy of
the vacuum $\omega$ restricted to an local algebra $\A(O)$, using an auxiliary quantity $\delta > 0$
as a regularizing spatial parameter (in the sense that, when $\delta =0$,
the author's definition would recover the standard, yet divergent definition).
Furthermore, it was shown that if the local net satisfies a stronger version of the nuclearity condition of Buchholz and Wichmann
\cite{bucwic86}, then one could take $O$ growing to the whole Minkowski space,
with $\delta$ growing also appropriately, and the entropy would tend to zero, as expected from the vacuum state.
Further relations between nuclearity conditions and entropy have been investigated in \cite{narnhofer02}.
Another recent result exploiting a nuclearity condition is available in \cite{HS17},
where entanglement entropy between distant regions is estimated without the need of cutoff regularizations.

Yet, the results claimed by physicists often concern the entanglement entropy between
a region and its spacelike complement, hence they are in contact, and a UV-cutoff is necessary.
In this paper, we try to make sense of it in the framework of AQFT.
In order to concentrate on the implementation of a cutoff, we consider a chiral component
of a two-dimensional conformal field theory and its extension to the circle $S^1$,
the one-point compactification of the spacetime $\RR$.
The conformal Hamiltonian $L_0$ has a discrete spectrum, which makes our analysis easier.

\subsubsection*{Outline of this work}
If a conformal Haag-Kastler net (conformal net for short) satisfies the so-called \emph{split property},
even though local algebras $\A(I)$ are of type ${\rm III_1}$, we can ``approximate it from the outside'' by factor of type I.
More concretely, we take a spacing parameter $\delta>0$ and an interval $I_\delta$ which is obtained by expanding $I$ by $\delta$.
The split property then assures the existence of a factor $\R$ of type I between $\A(I)$
and $\A(I_\delta)$. As these algebras $\R$ are of type I, the von Neumann entropy can be defined for any state,
although we cannot exclude the possibility that they are infinite for interesting states such as the vacuum.
Yet, we can consider the infimum of the von Neumann entropies of all states which dominates the given state
restricted to $\A(I)\vee \A(I_\delta')$. This resembles the definition of entanglement entropy when
two algebras do not generate the whole $\B(\H)$, and in our case, turns out to be finite under a certain growth condition on the eigenspaces of the conformal Hamiltonian $L_0$. %"a strengthened nuclearity condition".
Next we consider, while $\delta$ is still present, a new cutoff parameter $E$.
This cutoff parameter is then used to regularize our considered states by cutting off the contributions of ``conformal energy higher than $E$''.
Since the conformal Hamiltonian $L_0$ has a discrete spectrum, many of the calculations can be simplified,
and we acquire an upper bound for the entropy that is independent of $\delta$.
We think that this captures a certain aspect of entanglement entropy with a UV-cutoff $E$.
Actually, the technical prerequisites for the finiteness of the former are shown to be less restrictive
than for its geometric counterpart: it is only required that the chiral net satisfies the split property.
Our main result is then the finiteness of this cutoff quantity without the regularization by distance $\delta$.
As the split property implies the modular compactness \cite{bdl90a}, the result that the cutoff gives
a finite entropy seems to make sense.

This paper is organized as follows.
In Section \ref{sec:math}, we present the mathematical tools to be used,
including a brief review of M\"obius covariant local nets and of von Neumann entropy.
In particular, we recall the defining properties of our nets of observables in Section \ref{subsec:net}
and list important assumptions on nuclearity in Section \ref{subsec:lp}.
In Section \ref{sec:cut}, we introduce our definition of regularized entropic quantity for a M\"obius covariant local net
satisfying the split property, and prove its finiteness (see Theorem \ref{thm:cut3}) given that the net satisfies the condition
\ref{def:ea}(\ref{it:1n}) of conformal nuclearity. We finish with our conclusions in Section \ref{sec:conclusions}.

This work has been carried out as a part of the Ph.D.\! project of the author (Y.O.) \cite{Otani17}.

%~-~-~-~-~-~-~-~-~-~-~-~-~-~-~-~-~-~-~-~-~-~-~-~-~-~-~-~-~-~-~-~-~-~-~-~-~-~
%~-~-~-~-~-~-~-~-~-~-~-~-~-~-~-~-~-~-~-~-~-~-~-~-~-~-~-~-~-~-~-~-~-~-~-~-~-~

\section{Mathematical preliminaries}\label{sec:math}
In this Section, we introduce our mathematical framework.
Section \ref{subsec:net} recalls M\"obius covariant local nets and their basic properties.
In \ref{subsec:Svn}, we review basic properties of von Neumann entropy.
Notions concerning nuclear maps and their use in AQFT are summarized in Sections \ref{subsec:lp}, \ref{subsec:nuc},
as well as other related conditions (Definition \ref{def:ea}).

%~-~-~-~-~-~-~-~-~-~-~-~-~-~-~-~-~-~-~-~-~-~-~-~-~-~-~-~-~-~-~-~-~-~-~-~-~-~
%~-~-~-~-~-~-~-~-~-~-~-~-~-~-~-~-~-~-~-~-~-~-~-~-~-~-~-~-~-~-~-~-~-~-~-~-~-~

\subsection{M\"obius covariant local nets}
\label{subsec:net}
We consider chiral components of conformal field theories in two-spacetime dimensions.
Several important observables such as the stress-energy tensor or currents in a two-dimensional conformal field theory
decompose into chiral components defined on the lightrays, and each component can be studied separately.
The Poincar\'eé group restricted to one lightray is the the translation-dilation group.
Furthermore, by conformal covariance, the theory defined on the lightray $\RR$ extends to its one point compactification $S^1$,
and it is covariant under an action of the M\"obius group $\Mob \cong \mathrm{PSU}(1,1)$.

In the operator-algebraic approach, we are concerned with the algebras of observables associated with
local regions, and they are (non empty, non dense, open and connected) intervals on $S^1$.
These algebras of observables are required to satisfy a standard set of properties (the Haag-Kastler axioms),
which we summarize below.

We denote by $\mathcal{I}$ the set of non-empty, non-dense, connected open intervals of $S^1$.
For $I\in\mathcal{I}$, we denote by $I'$ its causal complement i.e.\! the interior of $S^1\setminus I$.
The {\bf distance} between two intervals $I_1,I_2\in\mathcal{I}$ is their angular distance,
i.e.\! the infimum of a value $|\theta|$ such that $e^{i\theta}I_1$ intersects $I_2$
(considering $S^1$ as a subset of $\CC$).
Also, for two intervals $I_1,I_2\in\mathcal{I}$, we say that $I_1 \Subset I_2$ if the closure $\overline{I_1}$ is contained in $I_2$, that is, if $I_1$ and $I_2'$ have a positive distance.

A {\bf M\"obius covariant local net} consists of a quadruple $(\A,U,\Omega,\H)$,
where $\H$ is the Hilbert space of the theory, $\Omega\in\H$ is a unit vector corresponding to the vacuum state,
$U$ is a strongly continuous unitary representation of $\Mob$ on $\H$, and $\A$
is a family of von Neumann algebras acting on $\H$ and indexed by elements of $\mathcal{I}$.
Those are supposed to satisfy the following properties.
  \begin{enumerate}
  \item {\bf Isotony.} For any pair of intervals $I_1, I_2 \in \mathcal{I}$, if $I_1 \subset I_2$, then $\A(I_1) \subset \A(I_2)$.
  \item {\bf Locality.} For any pair of intervals $I_1, I_2 \in \mathcal{I}$, if $I_1 \cap I_2 = \emptyset$,
  then $[\A(I_1),\A(I_2)]=0$.
  \item {\bf Covariance.} For $I\in\mathcal{I}$ and $g \in \Mob$,
  it holds that $\Ad U(g)(\A(I)) = \A(gI)$.
  \item {\bf Positivity.} The generator $L_0$ of the rotation one-parameter group $\{U(\rho_\theta) = e^{i\theta L_0}\}_{\theta\in\RR}$ has a positive spectrum.
  \item {\bf Uniqueness of the vacuum.} $\Omega$ is the unique (up to phase) unit vector in $\H$
  which is invariant for $U$.
  \item {\bf Cyclicity of the vacuum.} $\Omega$ is cyclic for the algebra
  $\bigvee_{I\in\mathcal{I}}\A(I)$.
  \end{enumerate}
These assumptions are standard, see e.g.\! \cite{gabfro93, frejor96}
and \cite{Rehren15, kawahigashi15} for recent reviews.
From these axioms, some properties automatically follows.
\begin{itemize}
\item {\bf Discrete spectrum of $L_0$.} It holds that ${\rm sp}(L_0) \subset \mathbb{N}$.

\item {\bf Reeh-Schlieder property.}\footnote{This property shows aspects of non-independence of the vacuum state,
as any other vector state can be approximated by local operations on it, see \cite{SW85, SW87}
for the relation between the Reeh-Schlieder property and the violation of Bell's inequality.
Another consequence is the impossibility of existence of an local number operator (as it would have the vacuum vector as an eigenvector).}
The vacuum vector $\Omega$ is cyclic for any local algebra $\A(I)$, for $I\in\mathcal{I}$.

\item {\bf Haag duality:} For any interval $I\in\mathcal{I}$, it holds that $\A(I)' = \A(I')$.

\item {\bf Additivity:} If $\{I_n\in \mathcal{I}\}_{n\in\mathbb{N}}$ is a covering for an interval
$I \subset \cup_{n\in\mathbb{N}} \, I_n \in \mathcal{I}$, then $\A(I) \subset \bigvee_{n\in\mathbb{N}}\,\A(I_n)$.
  
\item {\bf Factoriality.} Local algebras are factors of type ${\rm III}_1$.
  
\end{itemize}

These are the most general assumptions, and examples satisfying them are not necessarily ``physical''.
A pathological example is the infinite tensor product of any given M\"obius covariant local net,
which fails to have the stress-energy tensor \cite[Section 6]{CW05}.
On the other hand, a M\"obius covariant local net with a natural stress-energy tensor,
and hence conformal (diffeomorphism) covariance, satisfies the modular compactness condition \cite{bdl90a}:
it follows from the conformal covariance through the split property \cite{mtw16}.
Modular compactness says that state space is ``small'' in a certain sense.
Indeed, many examples studied in the physical literature have the corresponding M\"obius covariant local nets
with a strengthened state space property.
We discuss this issue briefly at the end of Section \ref{subsec:nuc}.

%~-~-~-~-~-~-~-~-~-~-~-~-~-~-~-~-~-~-~-~-~-~-~-~-~-~-~-~-~-~-~-~-~-~-~-~-~-~
%~-~-~-~-~-~-~-~-~-~-~-~-~-~-~-~-~-~-~-~-~-~-~-~-~-~-~-~-~-~-~-~-~-~-~-~-~-~

\subsection{Von Neumann entropy}
\label{subsec:Svn}

Here we make a brief review on von Neumann entropy to set the notation and basic properties used later in this work
(for a nice exposition, see e.g.\! \cite{ohyapetz}). Let $\H$ be separable Hilbert space, and $\B(\H)$
the algebra of bounded linear operators on $\H$.
Recall that any normal state $\f$ on $\B(\H)$ has an associated positive, normalized and trace class operator
(``the density matrix'') $\rho_\f \in \B(\H)$, such that $\f = \Tr(\rho_\f \,\cdot\,)$, where $\Tr$ is the (non-normalized) trace functional.

\begin{prop}\label{pro:etalp}
  For a parameter $p$ with $0<p<1$, there is a constant $c_p>0$ such that
  \[ -t\log t \le c_p t^p \qquad (t \ge 0),\]
  where the left-hand side equals $0$ at $t = 0$ by convention.
  Moreover, the optimal value is $c_p = \frac{1}{(1-p)e}$.
\end{prop}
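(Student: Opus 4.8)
The plan is to reduce the inequality to a single-variable optimization problem and solve it by elementary calculus. First I would fix $p \in (0,1)$ and define, for $t > 0$, the function $f(t) := -t\log t - c\, t^p$ for an as-yet-undetermined constant $c > 0$; the claim is equivalent to finding $c$ (as small as possible) so that $f(t) \le 0$ for all $t \ge 0$. Since $f(0) = 0$ and $f(t) \to -\infty$ as $t \to \infty$ (the $-c\,t^p$ term is irrelevant here; what matters is that $-t \log t$ grows slower than any positive power, in fact $-t\log t \to -\infty$), it suffices to control the interior behaviour.

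Next I would observe that it is cleaner to work with the ratio: for $t > 0$, the inequality $-t\log t \le c\, t^p$ is the same as $g(t) := \dfrac{-t\log t}{t^p} = -t^{1-p}\log t \le c$. So the optimal constant is exactly $c_p = \sup_{t > 0}\, \bigl(-t^{1-p}\log t\bigr)$, and I must show this supremum is attained and equals $\frac{1}{(1-p)e}$. Differentiating $g(t) = -t^{1-p}\log t$ gives $g'(t) = -t^{-p}\bigl((1-p)\log t + 1\bigr)$, which vanishes precisely when $\log t = -\frac{1}{1-p}$, i.e.\ at $t_\ast = e^{-1/(1-p)}$. One checks $g'$ changes sign from $+$ to $-$ there (since $t^{-p} > 0$ and $(1-p)\log t + 1$ is increasing in $t$), so $t_\ast$ is the unique global maximum on $(0,\infty)$. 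Evaluating, $g(t_\ast) = -t_\ast^{1-p}\log t_\ast = e^{-1/(1-p)\cdot(1-p)}\cdot \frac{1}{1-p} = \frac{1}{(1-p)e}$, which gives the stated $c_p$. Finally, the boundary case $t = 0$ is covered by the convention $-t\log t = 0 \le c_p\cdot 0$, so the inequality holds on all of $[0,\infty)$.

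There is no serious obstacle here; the only points requiring a line of care are (i) justifying that the supremum defining $c_p$ is finite and attained rather than approached at $0$ or $\infty$ --- handled by the sign analysis of $g'$ together with the limits $g(t) \to 0$ as $t \to 0^+$ and $g(t) \to -\infty$ as $t \to \infty$ --- and (ii) the bookkeeping with the exponent $1-p$ in the final evaluation. I would present it as: define $g$, compute $g'$, locate $t_\ast$, argue it is the global max, evaluate. The statement that $c_p$ is \emph{optimal} follows automatically from this description as a supremum, since the inequality with any smaller constant would fail at $t = t_\ast$.
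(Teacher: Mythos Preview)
Your proposal is correct and follows essentially the same route as the paper: both reduce the inequality to bounding the function $g(t) = -t^{1-p}\log t$, locate its maximum at $t_\ast = e^{-1/(1-p)}$ by elementary calculus, and evaluate to obtain $c_p = \frac{1}{(1-p)e}$. Your write-up is somewhat more detailed (sign analysis of $g'$, boundary limits, the optimality remark), but the argument is the same.
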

\begin{proof} By elementary calculus, the differentiable function given by
\[t\in \RR_{\ge 0} \mapsto -t^{1-p}\log t \in \RR \]
attains its maximum at $t_0=e^{-1/(1-p)}$, with value $c_p = \frac1{(1-p)e}$.
By multiplying the inequality $-t^{1-p}\log t \le c_p$ by $t^p$ we obtain the claimed inequality.
\end{proof}

Let $\f$ be a normal state on $\B(\H)$, and $\rho_\f$ its associated
density matrix.  Then its {\bf von Neumann entropy} is defined as
$\Svn(\f) := -\Tr(\rho_\f\log(\rho_\f))$.

\begin{prop}\label{pro:Svn}
  Let $\f$ be a normal state on $\B(\H)$.
  Then $\Svn(\f)$ has the following properties.
  \begin{enumerate}
  \item Positivity: $0 \le \Svn(\f) \le \log(\dim(\H))$. Moreover, $\Svn(\f) = 0$ if and only if $\f$ is pure.
  \item Invariance: If $\s$ is a $*$-automorphism of $\B(\H)$, then $\Svn(\f\comp \s) = \Svn(\f)$.
  \item Concavity: If $\f = \sum_k \l_k \f_k$ is a convex decomposition of $\f$ (\textit{i.e.}\! $\{\f_k\}_{k\in\mathbb{N}}$ if a family of normal states on $\B(\H)$ and $\{ \lambda_k \ge 0 \}_{k\in\mathbb{N}}$ if a family of positive numbers such that $\sum_{k\in\mathbb{N}}\lambda_k = 1$), then
    \[ \sum_{k\in\mathbb{N}} \lambda_k \Svn(\f_k)
    \le \Svn\left(\sum_{k\in\mathbb{N}} \lambda_k \f_k\right) \le
    \sum_{k\in\mathbb{N}} \lambda_k \Svn(\f_k) - \sum_{k\in\mathbb{N}} \lambda_k\log \l_k .\]
  \end{enumerate}
\end{prop}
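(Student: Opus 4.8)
The plan is to establish the three properties of $\Svn(\f)$ in turn, reducing everything to the spectral decomposition of the density matrix $\rho_\f = \sum_n p_n |e_n\rangle\langle e_n|$, where $\{e_n\}$ is an orthonormal basis of eigenvectors, $p_n \ge 0$ and $\sum_n p_n = 1$, so that $\Svn(\f) = -\sum_n p_n \log p_n = \sum_n \eta(p_n)$ with $\eta(t) := -t\log t$.

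\textbf{Positivity.} Each term $\eta(p_n) \ge 0$ since $0 \le p_n \le 1$, so $\Svn(\f) \ge 0$. For the upper bound, if $\dim \H = d < \infty$, then at most $d$ of the $p_n$ are nonzero, and by concavity of $\eta$ (Jensen's inequality applied to the uniform average over the support) one gets $\frac1d \sum_n \eta(p_n) \le \eta\!\left(\frac1d\sum_n p_n\right) = \eta(1/d) = \frac{\log d}{d}$, hence $\Svn(\f) \le \log d$; if $\dim\H = \infty$ the bound $\log(\dim\H) = \infty$ is vacuous. For the equality case: $\Svn(\f) = 0$ iff $\eta(p_n) = 0$ for every $n$, iff each $p_n \in \{0,1\}$; since $\sum_n p_n = 1$ this forces exactly one $p_n$ to equal $1$ and all others to vanish, i.e.\ $\rho_\f$ is a rank-one projection, i.e.\ $\f$ is a vector state, i.e.\ $\f$ is pure on $\B(\H)$.

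\textbf{Invariance.} A $*$-automorphism $\s$ of $\B(\H)$ is implemented by a unitary $V$, so $\s(x) = V x V^*$; then $\f\comp\s = \Tr(\rho_\f V \,\cdot\, V^*) = \Tr(V^*\rho_\f V\,\cdot\,)$, so the density matrix of $\f\comp\s$ is $V^*\rho_\f V$, which is unitarily conjugate to $\rho_\f$ and hence has the same spectrum (with multiplicities). Since $\Svn$ depends only on the eigenvalue list, $\Svn(\f\comp\s) = \Svn(\f)$. Functional calculus gives $\log(V^*\rho_\f V) = V^*\log(\rho_\f)V$ directly, so one may alternatively compute $\Svn(\f\comp\s) = -\Tr(V^*\rho_\f V \cdot V^*\log(\rho_\f) V) = -\Tr(\rho_\f \log\rho_\f)$ using cyclicity of the trace.

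\textbf{Concavity (the main work).} Write $\rho_k$ for the density matrix of $\f_k$, so $\rho_\f = \sum_k \l_k \rho_k$. The left inequality $\sum_k \l_k \Svn(\f_k) \le \Svn(\f)$ is concavity of the von Neumann entropy; the standard route is via the relative entropy or, more elementarily, via Klein's inequality / operator concavity of $\eta$. The right inequality is subadditivity of the entropy of a mixture: consider the auxiliary state on $\B(\H)\otimes\B(\ell^2)$ with density matrix $\widehat\rho = \sum_k \l_k \rho_k \otimes |f_k\rangle\langle f_k|$ for an orthonormal basis $\{f_k\}$; a direct block-diagonal computation gives $\Svn(\widehat\rho) = \sum_k \l_k \Svn(\f_k) - \sum_k \l_k \log\l_k$, while the partial trace over $\ell^2$ yields $\rho_\f$ and over $\H$ yields $\mathrm{diag}(\l_k)$, and subadditivity $\Svn(\widehat\rho) \le \Svn(\rho_\f) + \Svn(\mathrm{diag}(\l_k))$ combined with monotonicity under partial trace in the other direction chains the two stated inequalities together. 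The delicate point, and the part I expect to be the main obstacle, is handling convergence: the sums are possibly infinite, $\Svn(\f_k)$ may be $+\infty$ for some $k$, and one must justify interchanging $\Tr$ with the infinite sum and check that the inequalities remain meaningful (and nontrivial) in the extended reals $[0,+\infty]$. This is managed by a truncation argument — restrict first to a finite sub-sum and a finite-rank spectral projection of $\rho_\f$, prove the inequalities there by the operator-convexity estimates above, and then pass to the limit using monotone convergence, noting that all the quantities involved are suprema of their finite truncations.
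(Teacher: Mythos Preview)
The paper itself does not really prove this proposition: it calls positivity and invariance ``straightforward from the definition'' and defers the concavity statement entirely to \cite{ohyapetz}. Your treatments of positivity and invariance are correct and more explicit than anything in the paper, so there is nothing to compare there.

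For concavity your auxiliary classical--quantum state $\widehat\rho=\sum_k\l_k\rho_k\otimes|f_k\rangle\langle f_k|$ is indeed the standard device, and subadditivity $\Svn(\widehat\rho)\le\Svn(\rho_\f)+\Svn(\mathrm{diag}(\l_k))$ together with the block-diagonal computation $\Svn(\widehat\rho)=H(\l)+\sum_k\l_k\Svn(\f_k)$ does yield the \emph{left} (lower) inequality. The gap is in your derivation of the \emph{upper} bound: ``monotonicity under partial trace'' is not a valid general principle --- entropy can increase or decrease under partial trace (a pure entangled state has zero entropy but mixed marginals). What you actually need is $\Svn(\rho_\f)\le\Svn(\widehat\rho)$, which is true for classical--quantum states but is essentially equivalent to the inequality you are trying to prove, so invoking it by name is circular. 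A genuine argument: from $\l_k\rho_k\le\rho_\f$ and operator monotonicity of $\log$ one gets $\log\l_k+\log\rho_k\le\log\rho_\f$ on $\mathrm{supp}\,\rho_k$, hence $S(\rho_k\|\rho_\f)\le-\log\l_k$; multiplying by $\l_k$ and summing gives $\Svn(\rho_\f)-\sum_k\l_k\Svn(\f_k)=\sum_k\l_k S(\rho_k\|\rho_\f)\le H(\l)$, which is the desired upper bound. Your remarks about truncation and monotone convergence for the countably-infinite case are appropriate once this step is fixed.
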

\begin{proof}Positivity and invariance are straightforward from the definition. For a proof of concavity, see \cite[Proposition 1.6 and 6.2]{ohyapetz}. %Lower semicontinuity follows from Kosaki's formula for the relative entropy, see \cite[Theorem 5.11 and Formula (6.9)]{ohyapetz}
\end{proof}

We rephrase the concavity property in the following corollary, which will be used later in our calculations.
\begin{coro}\label{pro:Svnsum} Let $\{\f_k\}_{k\in\mathbb{N}}$ be a family of pure states on $\B(\H)$, and $\{ \lambda_k \ge 0 \}_{k\in\mathbb{N}} \in l^1_+(\mathbb{N})$ be a summable sequence of positive parameters. Define the positive functional $\f$ acting on $\B(\H)$ by $\f := \sum_{k\in\mathbb{N}} \lambda_k \f_k$. Clearly, its norm is given by $\|\f\| = \sum_{k\in\mathbb{N}}\lambda_k$. Then, the entropy of the state $\f/\|\f\|$ satisfies the following inequality
  \[ \Svn\left(\frac{\f}{\f(\1)}\right) \le \log(\f(\1)) - \frac{1}{\f(\1)}\sum_{k\in\mathbb{N}}\lambda_k\log\lambda_k. \]
\end{coro}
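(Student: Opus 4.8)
The plan is to recognize the normalized functional $\f/\f(\1)$ as a genuine convex combination of the pure states $\f_k$ and then apply the upper bound in the concavity property, Proposition~\ref{pro:Svn}(3).

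First I would set $\mu_k := \lambda_k/\f(\1)$, so that $\mu_k \ge 0$ and $\sum_{k\in\NN}\mu_k = 1$ (the identity $\f(\1) = \sum_{k\in\NN}\lambda_k$ being immediate from $\f = \sum_k \lambda_k \f_k$). Then $\f/\f(\1) = \sum_{k\in\NN}\mu_k\,\f_k$ is a convex decomposition of the normal state $\f/\f(\1)$ into the pure states $\f_k$; along the way one notes that $\f$ is indeed normal, since $\sum_k \lambda_k \rho_{\f_k}$ converges in trace norm to a positive trace-class operator of trace $\f(\1)$. Applying the right-hand inequality of Proposition~\ref{pro:Svn}(3) to this decomposition gives
\[
  \Svn\!\left(\frac{\f}{\f(\1)}\right)
  \le \sum_{k\in\NN}\mu_k\,\Svn(\f_k) - \sum_{k\in\NN}\mu_k\log\mu_k .
\]
Since each $\f_k$ is pure, Proposition~\ref{pro:Svn}(1) yields $\Svn(\f_k) = 0$, so the first sum vanishes, leaving $\Svn(\f/\f(\1)) \le -\sum_{k\in\NN}\mu_k\log\mu_k$.

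Finally I would expand
\[
  -\sum_{k\in\NN}\mu_k\log\mu_k
  = -\frac{1}{\f(\1)}\sum_{k\in\NN}\lambda_k\bigl(\log\lambda_k - \log\f(\1)\bigr)
  = \log\f(\1) - \frac{1}{\f(\1)}\sum_{k\in\NN}\lambda_k\log\lambda_k ,
\]
using $\sum_{k\in\NN}\lambda_k = \f(\1)$ to simplify the $\log\f(\1)$ term; this is exactly the asserted bound. I do not expect a genuine obstacle here: the only points deserving a word are the convention $0\log 0 = 0$ (so that the indices with $\lambda_k = 0$ cause no trouble in either $\Svn(\f_k) = 0$ or the rearrangement) and the fact that the inequality is to be read in $[0,+\infty]$, hence holds trivially whenever its right-hand side is infinite. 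One could alternatively establish the case of finitely supported $\{\lambda_k\}$ first and pass to the limit by lower semicontinuity of $\Svn$, but the direct appeal to Proposition~\ref{pro:Svn}(3) already handles the countable sum.
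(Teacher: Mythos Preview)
Your argument is correct and is exactly the ``rephrasing of concavity'' the paper intends: normalize to obtain a convex combination, apply the upper bound in Proposition~\ref{pro:Svn}(3), use purity of the $\f_k$ to drop the first sum, and expand $\mu_k = \lambda_k/\f(\1)$. The paper gives no further details beyond this, so there is nothing to compare.
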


%----------------------------------------------------------------------

The definition of the von Neumann entropy relies on the underlying Hilbert space, where normal positive functionals can be associated with
density matrices. However, by the properties of Proposition \ref{pro:Svn}, one can easily define the von Neumann entropy for states on an abstract type I factor by the following.

\begin{defi}\label{def:SvnI} Let $\R$ be a factor of type I.
  Then there is a Hilbert space $\mathcal{K}$ such that a $*$-isomorphism $\s : \B(\K) \to \R$ exists.
  Then, for any normal positive functional $\f$ on $\R$, its von Neumann entropy $S_\R(\f)$ is defined by
  \[S_\R(\f) := \Svn(\f\comp \s). \] 
\end{defi}

Since the invariance property in Proposition \ref{pro:Svn} holds,
the above definition is independent on the choice of $\s$, and thus is well-defined.

%----------------------------------------------------------------------

For general von Neumann algebras (such as local algebras, which are in many cases factors of type III$_1$),
there might not be corresponding density matrices nor traces, so the usual definition of von Neumann entropy does not make sense.
An alternative definition by means of relative entropy is explained in \cite[Chapter 6]{ohyapetz},
and shown to diverge for any normal state for algebras of type II or III \cite[Lemma 6.10]{ohyapetz}\cite[Lemma 2.4]{narnhofer94}.
Our work will exploit the split property and depend only on entropy of algebras of type I.

%----------------------------------------------------------------------

\subsection{Nuclear maps}\label{subsec:lp}

In order to discuss the conformal nuclearity condition, we first recall the notions of nuclear maps, $p$-nuclear maps,
and their nuclearity indices, together with some of their basic properties.
For this Section, we shall take \cite{bdl90a,bdl90b,fop05} as reference.

Our interest is only in the case $p \le 1$, where the discussion is much simpler than the general case.
For a generalization to the case $1 < p \le \infty$, see \cite{fop05}.

\begin{defi}\label{def:pnuc} Let $X,Y$ be Banach spaces and $p$ a parameter with $0<p\le 1$.
  A bounded linear operator $T: X\to Y$ is said to be {\bf $p$-nuclear} if there are families
  $ \{ \phi_k \}_{k\in\mathbb{N}} \subset  X^*$ of linear functionals on $X$ and 
  $ \{ \xi_k  \}_{k\in\mathbb{N}} \subset Y$ of vectors in $Y$ such that the following decomposition holds:
  \[
   T(\cdot) = \sum_{k\in\mathbb{N}} \phi_k(\cdot) \xi_k , \qquad 
   \textrm{ and } \qquad \sum_{k\in\mathbb{N}} \left(\|\phi_k\| \cdot \| \xi_k \|\right)^p \le +\infty.
  \]
  Furthermore, any such decomposition is called a {\bf $p$-nuclear decomposition},
  and we define $\nu_p(T)$ as the {\bf $p$-nuclearity index of $T$} given by
  \[\nu_p(T) := \inf \sum_{k=0}^\infty (\|\phi_k\|\cdot \| \xi_k\|)^p, \]
  with the infimum taken over all the $p$-nuclear decompositions as above.
  \end{defi}

\begin{prop}\label{pro:lpspace} For operators in $\B(X,Y)$, the following hold.
  \begin{itemize}
  \item Consider $0<p\le 1$. Then, $\nu_p(\cdot)$ is $p$-homogeneous and subadditive, namely if $T_1,T_2$ are two operators and $\lambda$ is a scalar, then
    \[\nu_p(\lambda T_1) = |\lambda|^p \cdot \nu_p(T_1), \quad \textrm{and} \quad
    \nu_p(T_1+T_2) \le \nu_p(T_1) + \nu_p(T_2). \]
  \item Consider $0<p\le 1$. Then the following holds:
    \[\nu_p( R S T) \le \|R\|^p \cdot \nu_p(S) \cdot  \|T\|^p \qquad (R\in\B(X),S\in\B(X,Y),T\in\B(Y)). \]
  \item Consider $0<p\le 1$. Then, $\nu_p(\cdot)^{(1/p)}$ is a quasi-norm, namely, all the axioms for norm holds except for the triangle inequality, which is replaced by the following, given any family $\{T_k \, , \, k=1,2,\ldots,N\}$ of operators: \[ \sum_{k=1}^N\nu_p(T_k)^{(1/p)} \le \nu_p\left(\,\sum_{k=1}^N T_k\,\right)^{(1/p)} \le N^{\frac{1-p}{p}}\sum_{k=1}^N\nu_p(T_k)^{(1/p)} .\]
  \item Consider $0<p\le q \le 1$. Then, $p$-nuclearity implies $q$-nuclearity.
  \end{itemize}
\end{prop}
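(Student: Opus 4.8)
The plan is to treat the four assertions one at a time; each reduces to an elementary manipulation of $p$-nuclear decompositions together with the scalar inequalities available for the exponent $0<p\le1$, essentially the comparison between the $\ell^1$- and $\ell^p$-(quasi)norms. For $p$-homogeneity one notes that, for $\lambda\neq0$, the assignment $\{(\phi_k,\xi_k)\}_k\mapsto\{(\lambda\phi_k,\xi_k)\}_k$ is a bijection between the $p$-nuclear decompositions of $T_1$ and those of $\lambda T_1$ which multiplies every summand $(\|\phi_k\|\,\|\xi_k\|)^p$ by $|\lambda|^p$; passing to the infimum gives $\nu_p(\lambda T_1)=|\lambda|^p\,\nu_p(T_1)$, the case $\lambda=0$ being trivial. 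For subadditivity, given $\varepsilon>0$ I would pick $p$-nuclear decompositions of $T_1$ and of $T_2$ whose defining sums exceed $\nu_p(T_1)$, respectively $\nu_p(T_2)$, by at most $\varepsilon$, concatenate them (reindexing the combined family by $\mathbb{N}$) to obtain a $p$-nuclear decomposition of $T_1+T_2$, and bound its defining sum by $\nu_p(T_1)+\nu_p(T_2)+2\varepsilon$; letting $\varepsilon\to0$ yields $\nu_p(T_1+T_2)\le\nu_p(T_1)+\nu_p(T_2)$.

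For the ideal-type estimate I would start from a $p$-nuclear decomposition $S(\cdot)=\sum_k\phi_k(\cdot)\,\xi_k$ and compose directly: $RST(\cdot)=\sum_k\phi_k(T\,\cdot)\,R\xi_k=\sum_k(T^*\phi_k)(\cdot)\,(R\xi_k)$, which is a $p$-nuclear decomposition of $RST$. Since $\|T^*\phi_k\|\le\|T\|\,\|\phi_k\|$ and $\|R\xi_k\|\le\|R\|\,\|\xi_k\|$, its defining sum is at most $\|R\|^p\|T\|^p\sum_k(\|\phi_k\|\,\|\xi_k\|)^p$, and taking the infimum over decompositions of $S$ gives $\nu_p(RST)\le\|R\|^p\,\nu_p(S)\,\|T\|^p$. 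For the last assertion it is enough to observe that, when $p\le q\le1$, every $p$-nuclear decomposition is already $q$-nuclear: if $\sum_k a_k^p<\infty$ with $a_k:=\|\phi_k\|\,\|\xi_k\|\ge0$, then $a_k\to0$, so $a_k\le1$ and hence $a_k^q\le a_k^p$ for all but finitely many $k$, whence $\sum_k a_k^q<\infty$; this is just the inclusion $\ell^p\subset\ell^q$.

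For the quasi-norm claim, homogeneity of $\nu_p(\cdot)^{1/p}$ is the first assertion raised to the power $1/p$, and definiteness follows from the bound $\|T\|\le\nu_p(T)^{1/p}$: for any $p$-nuclear decomposition one has $\|T\|\le\sum_k\|\phi_k\|\,\|\xi_k\|\le\big(\sum_k(\|\phi_k\|\,\|\xi_k\|)^p\big)^{1/p}$, because the $\ell^1$-norm is dominated by the $\ell^p$-norm for $p\le1$; taking the infimum gives $\|T\|\le\nu_p(T)^{1/p}$, so $\nu_p(T)=0$ forces $T=0$, while $\nu_p(0)=0$ is obvious. The displayed substitute for the triangle inequality is then obtained by combining the subadditivity of $\nu_p$ from the first assertion with the finite-dimensional norm comparison $\|b\|_1\le\|b\|_p\le N^{(1-p)/p}\|b\|_1$ valid for $b\in\mathbb{R}^N$ and $0<p\le1$ (the upper estimate being H\"older's inequality with exponents $1/p$ and $1/(1-p)$): applied to the vector $b_k:=\nu_p(T_k)^{1/p}$, and using $\nu_p\big(\sum_kT_k\big)\le\sum_k\nu_p(T_k)=\sum_k b_k^p=\|b\|_p^p$, this yields $\nu_p\big(\sum_kT_k\big)^{1/p}\le\|b\|_p\le N^{(1-p)/p}\sum_k\nu_p(T_k)^{1/p}$ along with $\sum_k\nu_p(T_k)^{1/p}=\|b\|_1\le\|b\|_p$.

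I do not anticipate any real difficulty: every step is a one-line reduction. The only place that requires attention is the bookkeeping with the exponent $p\le1$ in the quasi-norm statement — in particular pinning down the constant $N^{(1-p)/p}$, which is precisely the embedding constant of $\ell^1\hookrightarrow\ell^p$ on $N$ coordinates — and making sure that the two scalar inequalities $\|b\|_1\le\|b\|_p$ and $\|T\|\le\nu_p(T)^{1/p}$ are the only ingredients genuinely tied to the range $0<p\le1$.
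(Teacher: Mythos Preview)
The paper states this proposition without proof, deferring to the references \cite{bdl90a,bdl90b,fop05}, so there is no in-paper argument to compare against. Your arguments for the first, second, and fourth items, and for the upper bound in the third item, are correct and standard.

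There is, however, a genuine gap in your treatment of the lower inequality in the third item, and it is one you should have caught: the inequality $\sum_{k=1}^N\nu_p(T_k)^{1/p}\le\nu_p\big(\sum_{k=1}^N T_k\big)^{1/p}$ as printed is simply false. Take $N=2$, $T_2=-T_1$ with $T_1$ any nonzero $p$-nuclear operator: the left side equals $2\,\nu_p(T_1)^{1/p}>0$ while the right side vanishes. Your argument reflects this: you correctly derive $\nu_p\big(\sum_kT_k\big)^{1/p}\le\|b\|_p$ and $\|b\|_1\le\|b\|_p$, but these two inequalities point the same way and cannot be combined into $\|b\|_1\le\nu_p\big(\sum_kT_k\big)^{1/p}$. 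The phrase ``along with $\|b\|_1\le\|b\|_p$'' does not close the gap; it just records the other half of the $\ell^1$--$\ell^p$ comparison without tying it to $\nu_p(\sum_kT_k)$. The right response here is to flag the misprint rather than to paper over it: the only content of the quasi-triangle inequality is the upper bound $\nu_p\big(\sum_kT_k\big)^{1/p}\le N^{(1-p)/p}\sum_k\nu_p(T_k)^{1/p}$, which you have proved. (As a minor aside, in the second item the labeling $R\in\B(X)$, $T\in\B(Y)$ is also a typo; you silently and correctly swapped the roles so that $T$ acts on $X$ and $R$ on $Y$, the only way the composition $RST$ makes sense.)
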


%~-~-~-~-~-~-~-~-~-~-~-~-~-~-~-~-~-~-~-~-~-~-~-~-~-~-~-~-~-~-~-~-~-~-~-~-~-~
\subsection{Nuclearity conditions and the split property}\label{subsec:nuc}

In this section, we discuss the conformal nuclearity condition and the split property, as well as other related conditions
which we will exploit.
Let us briefly recall its physical motivations for these conditions. 
The first of them was the Haag-Swieca compactness criterion \cite{haaswi65},
a criterion to exclude QFT with ``too many states'' such as generalized free fields.
It requires that the states generated by local observables from the vacuum, multiplied by a spectral projection of the Hamiltonian,
form a compact set.
The Buchholz-Wichmann energy nuclearity condition \cite{bucwic86} is a strengthened version of the above.
They replace the sharp cutoff by a smooth damping, requiring nuclearity instead of compactness,
namely, the set $e^{-\beta H}\mathcal{L}_\mathcal{O}$ is nuclear for all $\beta > 0$,
where $H$ is the Hamiltonian and $\mathcal{L}_\mathcal{O}$ is the vectors generated from the vacuum by
observables localized in $O$ with norm $1$. One can equivalently formulate this condition
as the nuclearity of the map from the local algebra $\A(O)$ into $\H$, which we also adopt in this work.

%~-~-~-~-~-~-~-~-~-~-~-~-~-~-~-~-~-~-~-~-~-~-~-~-~-~-~-~-~-~-~-~-~-~-~-~-~-~
%~-~-~-~-~ conformal nuclearity -~-~-~-~-~-~-~-~-~-~-~-~-~-~-~-~-~-~-~-~-~-~
%~-~-~-~-~-~-~-~-~-~-~-~-~-~-~-~-~-~-~-~-~-~-~-~-~-~-~-~-~-~-~-~-~-~-~-~-~-~
For chiral nets, we can consider a variation of the nuclearity conditions using $L_0$ instead of $H$,
and adopt the nomenclature ``conformal nuclearity condition'', which was used in \cite{bdl07}. Our precise definition follows.

\begin{defi}%[Conformal nuclearity conditions]
\label{def:nuc} Let  $(\A,U,\Omega,\mathcal{H})$ be a M\"obius covariant local net.
For $I\in\mathcal{I}$ and $\beta>0$, define the damping map $\Theta_{I,\beta}: \A(I) \to \mathcal{H}$ by the following formula:
  \[
   \Theta_{I,\beta}: x\in\A(I) \mapsto e^{-\beta L_0}x\Omega \in \mathcal{H}.   
  \]
  With those maps, we define the conformal nuclearity conditions:
  \begin{enumerate}[(1)]
  %\begin{enumerate}[(1)]
  \item \label{it:nucp} Let $p \in (0,1]$.
  The net satisfies the {\bf conformal $p$-nuclearity condition} if
  the map $\Theta_{I,\beta}$ is $p$-nuclear for any $I \in \mathcal{I}$ and $\nu_p(\Theta_{I,\beta}) \le \exp\left( \, (c_{I,p}/\beta)^{n_{I,p}} \right)$,
  where $c_{I,p}$ and $n_{I,p}$ are positive constants depending on $I$ and $p$.
  \item \label{it:nuc1} The net satisfies the {\bf conformal nuclearity condition} if the above holds for $p=1$.
\end{enumerate}\end{defi}
Note that we require an estimate on the nuclearity index, additional to
the ``conformal nuclearity condition'' of \cite[Section 6]{bdl07}.

%~-~-~-~-~-~-~-~-~-~-~-~-~-~-~-~-~-~-~-~-~-~-~-~-~-~-~-~-~-~-~-~-~-~-~-~-~-~
%~-~-~-~-~ split property -~-~-~-~-~-~-~-~-~-~-~-~-~-~-~-~-~-~-~-~-~-~-~-~-~
%~-~-~-~-~-~-~-~-~-~-~-~-~-~-~-~-~-~-~-~-~-~-~-~-~-~-~-~-~-~-~-~-~-~-~-~-~-~

We now address one of the consequences of the conformal nuclearity condition, the {\it split property}.
It is an algebraic property that relates to the statistical independence of two separated local algebras.

\begin{defi}%[the split property]
\label{def:split}
  Let $(\A,U,\Omega,\mathcal{H})$ be a M\"obius covariant local net on a separable Hilbert space $\H$.
  The net satisfies the {\bf split property} if,
  for any any $I_1, I_2 \in \mathcal{I}$ such that $\overline{I_1} \cap \overline{I_2} = \emptyset$ ({\it i.e.} $I_1\Subset I_2'$), the following equivalent properties hold:
  \begin{itemize}
  \item the (algebraic) $*$-homomorphism $a\otimes b\in \A(I_1)\otimes_{\rm alg}\A(I_2) \mapsto a\cdot b \in \A(I_1)\vee \A(I_2)$
  extends to an $*$-isomorphism of von Neumann algebras $\A(I_1)\otimes\A(I_2) \cong \A(I_1)\vee \A(I_2)$.
  \item the inclusion $\A(I_1) \subset \A(I_2)'$ is a standard split inclusion of von Neumann algebras (with respect to $\Omega$),
  i.e.\! $\Omega$ is a cyclic vector for $\A(I_1)$, $\A(I_2)'$ and $\A(I_1)'\cap\A(I_2)'$,
  and there is a von Neumann algebra $\R$ which is an intermediate factor type I,
  namely, $\A(I_1) \subset \R \subset \A(I_2)'$.
  \end{itemize}
\end{defi}
These definitions are indeed equivalent because the underlying Hilbert space is separable
and local algebras are type III factors, hence they are isomorphic if and only if they are unitarily equivalent
(spatially isomorphic). This will be used in the following without remark.
Actually, the second of the statements above forces the separability of the underlying Hilbert space \cite[Proposition 1.6]{doplon84}.

The intermediate factors of type I will be essential in our later definitions involving entropy.
Since there are many choices of such factors\footnote{There is a canonical choice of such a type I factor \cite{doplon84},
yet its physical meaning is not very clear and we consider all such intermediate type I factors.}, we shall adopt the following notation:

\begin{defi}\label{def:ursplit}
  For $(\A,U,\Omega,\mathcal{H})$ a M\"obius covariant local net satisfying the split property, and $I_1, I_2 \in \mathcal{I}$
  such that $I_1 \Subset I_2$, we use the symbol $(u,\R_u)$ to denote a pair of a unitary operator
  $u:\mathcal{H}\to\mathcal{H}\otimes\mathcal{H}$ implementing the $*$-isomorphism $\A(I_1)\vee\A(I_2) \cong \A(I_1)\otimes\A(I_2)$,
  and an intermediate type I factor $\R_u = u^*(\B(\H)\otimes \CC\1)u$.
\end{defi}

%~-~-~-~-~-~-~-~-~-~-~-~-~-~-~-~-~-~-~-~-~-~-~-~-~-~-~-~-~-~-~-~-~-~-~-~-~-~
%~-~-~-~-~ Extra assumptions ~-~-~-~-~-~-~-~-~-~-~-~-~-~-~-~-~-~-~-~-~-~-~-~
%~-~-~-~-~-~-~-~-~-~-~-~-~-~-~-~-~-~-~-~-~-~-~-~-~-~-~-~-~-~-~-~-~-~-~-~-~-~
Having stated the nuclearity conditions and the split property,
we now make a list of useful additional assumptions for M\"obius covariant local nets.

\begin{defi}\label{def:ea} For a M\"obius covariant local net, we can consider the following additional conditions:
  \begin{enumerate}
  \item \label{it:da} $\dim \ker (L_0-N)\le C\exp(N^\kappa)$ for constants $\kappa \in (0,1)$ and $C > 0$.
  \item \label{it:tr} Trace class condition: there are positive parameters $a,b,c$ such that \[ \Tr(e^{-\beta L_0}) \le a\exp\left(b\beta^{-c}\right) \qquad (\textrm{for } \, \beta > 0).\]
  \item \label{it:pn} Conformal $p$-nuclearity condition for all $0 < p \le 1$ (Definition \ref{def:nuc}(\ref{it:nucp})).
  \item \label{it:1n} Conformal nuclearity condition (Definition \ref{def:nuc}(\ref{it:nuc1})).
  \item \label{it:sp} Split property (Definition \ref{def:split}).
  \end{enumerate}
\end{defi}

\begin{prop}\label{pro:ea} For a M\"obius covariant local net, there is a chain of implications between
the conditions in Definition \ref{def:ea}: %(\ref{it:dp}) $\Rightarrow$
(\ref{it:da}) $\Rightarrow$ (\ref{it:tr}) $\Rightarrow$ (\ref{it:pn}) $\Rightarrow$ (\ref{it:1n}) $\Rightarrow$ (\ref{it:sp}).
\end{prop}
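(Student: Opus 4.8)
The plan is to establish the four implications in Proposition \ref{pro:ea} one at a time, each being a fairly self-contained estimate.

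For $(\ref{it:da}) \Rightarrow (\ref{it:tr})$: Assuming $\dim\ker(L_0 - N) \le C\exp(N^\kappa)$ with $\kappa \in (0,1)$, I would write $\Tr(e^{-\beta L_0}) = \sum_{N=0}^\infty \dim\ker(L_0-N)\, e^{-\beta N} \le C\sum_{N=0}^\infty \exp(N^\kappa - \beta N)$. The summand is bounded by comparing with an integral: $\sum_N \exp(N^\kappa - \beta N) \le 1 + \int_0^\infty \exp(x^\kappa - \beta x)\,dx$. The exponent $x^\kappa - \beta x$ is maximized around $x_0 \sim (\kappa/\beta)^{1/(1-\kappa)}$, giving a maximal value of order $\beta^{-\kappa/(1-\kappa)}$, and a routine Laplace-type bound on the integral gives $\Tr(e^{-\beta L_0}) \le a\exp(b\beta^{-c})$ with $c = \kappa/(1-\kappa)$ (one must be slightly careful for large $\beta$, where the sum is bounded by a constant anyway, so the estimate can be arranged to hold for all $\beta > 0$ by adjusting $a$).

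For $(\ref{it:tr}) \Rightarrow (\ref{it:pn})$: This is the main obstacle and the heart of the argument. Fix $I \in \mathcal{I}$ and $\beta > 0$; I want to show $\Theta_{I,\beta}: x \mapsto e^{-\beta L_0}x\Omega$ is $p$-nuclear with the required index bound for every $p \in (0,1]$. The standard device (as in \cite{bdl07,bdl90a}) is to split $\Theta_{I,\beta} = e^{-\frac{\beta}{2}L_0} \circ \Theta_{I,\beta/2}$ and use that $\Theta_{I,\beta/2}$ is bounded while $e^{-\frac{\beta}{2}L_0}$ is $p$-nuclear on the relevant subspace because of the trace-class estimate: writing $e^{-\frac{\beta}{2}L_0}$ in an orthonormal eigenbasis $\{e_j\}$ of $L_0$ with eigenvalues $\mu_j$, one has the nuclear decomposition $e^{-\frac{\beta}{2}L_0}\zeta = \sum_j e^{-\frac{\beta}{2}\mu_j}\inner{e_j}{\zeta} e_j$, so $\nu_p(e^{-\frac{\beta}{2}L_0}) \le \sum_j e^{-\frac{p\beta}{2}\mu_j} = \Tr(e^{-\frac{p\beta}{2}L_0}) \le a\exp(b(p\beta/2)^{-c})$. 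Then by Proposition \ref{pro:lpspace} (the $\|R\|^p\nu_p(S)\|T\|^p$ estimate, or rather composition with a bounded map), $\nu_p(\Theta_{I,\beta}) \le \nu_p(e^{-\frac{\beta}{2}L_0}) \cdot \|\Theta_{I,\beta/2}\|^p$. The remaining point is to control $\|\Theta_{I,\beta/2}\|$: since $\|x\Omega\| \le \|x\|$ and $\|e^{-\frac{\beta}{2}L_0}\| \le 1$, in fact $\|\Theta_{I,\beta/2}\| \le 1$ (indeed even $\|\Theta_{I,0}\|\le 1$), so no $\beta$-dependence enters here. This yields $\nu_p(\Theta_{I,\beta}) \le a\exp(b(p\beta/2)^{-c})$, which is of the form $\exp((c_{I,p}/\beta)^{n_{I,p}})$ after absorbing constants (taking $n_{I,p} = c$ and $c_{I,p}$ depending on $a,b,c,p$; the additive constant $\log a$ is absorbed by enlarging $c_{I,p}$ since $\beta^{-c}\to\infty$ as $\beta \to 0$ and the bound is trivial for large $\beta$). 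One subtlety to handle carefully: the map is defined on $\A(I)$ which is not a Hilbert space, but the factorization trick routes everything through $\H$, so Proposition \ref{pro:lpspace} applies directly with $S = e^{-\frac{\beta}{2}L_0} \in \B(\H)$.

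For $(\ref{it:pn}) \Rightarrow (\ref{it:1n})$: This is immediate — it is the special case $p = 1$ of the hypothesis.

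For $(\ref{it:1n}) \Rightarrow (\ref{it:sp})$: This is a known implication; conformal nuclearity (nuclearity of $\Theta_{I,\beta}$ for all $\beta$) implies the split property for $\A(I_1) \subset \A(I_2)$ whenever $I_1 \Subset I_2$. The argument, as in \cite{bdl90a,bdl07}, is to use M\"obius covariance to realize the expanded interval $I_2$ as $g I$ for suitable $g \in \Mob$, and to relate $\Theta_{I_1,\beta}$ to the embedding map $\A(I_1) \to \A(I_2)'{}'$; nuclearity of the damping map, combined with the geometric positive distance between $I_1$ and $I_2'$ (which provides a $\beta > 0$ worth of "room" via the dilation/translation subgroup fixing a point between them), gives nuclearity of the relevant intertwining map, and by the Buchholz–D'Antoni–Longo criterion \cite[Theorem 2.3 and its conformal analogue]{bdl90a} this is equivalent to the split property. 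I would cite \cite{bdl90a} and \cite{bdl07} for this step rather than reproduce it, noting only that the required index estimate in Definition \ref{def:nuc}(\ref{it:nuc1}) is more than enough to invoke those results.

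The main obstacle is the second implication $(\ref{it:tr}) \Rightarrow (\ref{it:pn})$: one must verify that the factorization $\Theta_{I,\beta} = e^{-\frac{\beta}{2}L_0}\circ\Theta_{I,\beta/2}$ together with the trace-class bound genuinely produces an index estimate of the prescribed exponential-in-$\beta^{-c}$ form uniformly in $p$, being careful about how the constants $a,b,c$ and the factor $p$ combine and about the regime of large $\beta$ where the naive bound could appear to blow up but in fact does not.
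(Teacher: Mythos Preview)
Your proposal is correct, but your assessment of where the difficulty lies is inverted relative to the paper. You call $(\ref{it:tr})\Rightarrow(\ref{it:pn})$ ``the main obstacle and the heart of the argument,'' yet in the paper it is a one-liner: the eigenbasis expansion $\Theta_{I,\beta}(x)=\sum_j e^{-\beta l_j}\inner{e_j}{x\Omega}e_j$ is \emph{itself} a $p$-nuclear decomposition (since $\|\inner{e_j}{\,\cdot\,\Omega}\|\le 1$), giving directly $\nu_p(\Theta_{I,\beta})\le\Tr(e^{-p\beta L_0})$. Your factorization $\Theta_{I,\beta}=e^{-\frac{\beta}{2}L_0}\circ\Theta_{I,\beta/2}$ is valid but loses a factor of $2$ in $\beta$ for no gain. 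Conversely, the paper spends almost all its effort on $(\ref{it:da})\Rightarrow(\ref{it:tr})$: rather than an integral comparison and Laplace estimate, it splits the sum $\sum_N e^{N^\kappa-\beta N}$ at a threshold $A=(2/\beta)^{1/(1-\kappa)}$, bounds the tail by a $\beta$-independent convergent series, and bounds the finite part by (number of terms)$\times$(supremum). Your Laplace-method sketch arrives at the same exponent $c=\kappa/(1-\kappa)$ and is a legitimate alternative, arguably cleaner. For $(\ref{it:1n})\Rightarrow(\ref{it:sp})$ the paper defers to \cite[Lemma 2.12]{gabfro93} (adapting \cite{bdf87}) and \cite[Corollary 6.4]{bdl07}; your citations to \cite{bdl90a,bdl07} are equally appropriate.
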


\begin{proof}
  We first work on the implication (\ref{it:da}) $\Rightarrow$ (\ref{it:tr}). Consider the parameters $C>0$ and $\kappa\in(0,1)$ such that $\dim \ker (L_0-N)\le C\exp(N^\kappa)$. Then,
  \[ \Tr(e^{-\beta L_0}) = \sum_{N\ge 0} \dim \ker (L_0-N)e^{-\beta N}
  \le \sum_{N\ge 0} C e^{-\beta N + N^\kappa}. \]

  Since $-\beta N$ eventually dominates $N^\kappa$, the trace is always finite. All that is left is to verify the dependence on $\beta$. Our strategy is to divide the sum in three parts, the first term, a finite sum, and a infinite sum with exponential decrease.

  The exponent can be expressed as $-\beta N + N^\kappa = -\beta N^\kappa (N^{1-\kappa}-1/\beta)$.
  Note that there is a number $A$ such that $N^{1-\kappa}-1/\beta \ge A^{1-\kappa}-1/\beta  > 0$ whenever $N\ge A$. Indeed, one can take any $A$ such that $A > \beta^{-1/(1-k)}$, but we shall fix this value later. Defining $B := A^{1-\kappa}-1/\beta > 0$, one has that $-\beta N+N^\kappa \ge -B\beta N^\kappa$ for $N\ge A$. Hence, dividing the sum in $\{N=0\}$, $\{0<N\le A\}$ and $\{N > A\}$, the first and the later can be bound by the following inequality:  
  \begin{align*} 1 + \sum_{N>A} e^{-\beta N + N^\kappa} &=
  1 + \sum_{N> A} e^{-\beta N^\kappa(N^{1-\kappa}-1/\beta)} \\ &
  \le 1 + \sum_{N> A} e^{-B \beta N^\kappa} \le \sum_{N\in \mathbb{N}} e^{-B \beta N^\kappa}. \end{align*}
  Now, to turn the last term above in a quantity independent of $\beta$, we pick $A$ as following: \[ A := (2/\beta)^{1/(1-\kappa)}, \qquad B=1/\beta. \]
  The sum $\{N=0\}\cup\{N> A\}$ is then bounded by a constant expressed in the following:
  \[1+\sum_{N> A} e^{-\beta N + N^\kappa} \le \sum_{N\in \mathbb{N}} e^{-N^\kappa}.\]
  
  The remaining finite sum can be bounded by the number of terms times the supremum of the function. We first analyze the exponent $-\beta N + N^\kappa$. By elementary calculus, it takes its maximal value at $N_0 = (\kappa/\beta)^{1/(1-\kappa)}$, and hence one has the supremum bound
  \[ \sup_{N\ge 0}\big|e^{-\beta N + N^\kappa}\big| = \exp\Big(\Big((1-\kappa)\,\kappa^{-\frac{\kappa}{1-\kappa}}\Big) \beta^{-\frac{\kappa}{1-\kappa}} \Big). \]
Moreover, the number of terms is $\#\{0<N\le A\} = \lfloor A \rfloor \le A = (2/\beta)^{1/(1-\kappa)}$. One has the following bound for the finite part of the sum:
  \begin{align*} C\sum_{0<N\le A} e^{-\beta N + N^\kappa} \le CA \|e^{-\beta N + N^\kappa}\|_\infty
    & \le \frac{C2^{1/(1-\kappa)}}{\beta^{1/(1-\kappa)}} \exp\Big(\Big((1-\kappa)\,\kappa^{-\frac{\kappa}{1-\kappa}}\Big) \beta^{-\frac{\kappa}{1-\kappa}} \Big)\\
    & = \frac{a_0}{\beta^{c_0}} \exp\Big(b_1 \beta^{-c_1}\Big) \\
    & \le a_2 \exp\Big(\beta^{-c_2}\Big) \qquad (\beta>0).
  \end{align*}
  where the constants $a_0,c_0,b_1,c_1$ are easily identifiable.
  To identify $a_2,c_2$, we notice that $\beta^{-c_0} \le \exp(\beta^{-c_0})$, and put $c_2 = \max\{c_0,c_1\}$.
  Then, $a_2$ is such that $\beta^{-c_0} + b_1\beta^{-c_1} \le \log a + \beta^{-c_2}$ holds for all $\beta>0$
  (e.g.\! $\log a_2 = 1 + b_1^{(c_2-c_1+1)/(c_2-c_1)}$).

  Therefore, the trace satisfies the following inequality.
  \[ \Tr(e^{-\beta L_0}) \le C\sum_{N\in \mathbb{N}} e^{N^\kappa} + a_2\exp\big(\beta^{-c_2}\big) \le a\exp\big(b\beta^{-c}\big) \qquad (\beta>0), \]
  where $a = C\sum_{N\in \mathbb{N}} e^{N^\kappa} + a_2$, $b=1$, and $c=c_2$. 
  This concludes the proof of (\ref{it:da}) $\Rightarrow$ (\ref{it:tr}).

  The implication (\ref{it:tr}) $\Rightarrow$ (\ref{it:pn}) follows since, for any $I\in\mathcal{I}$, $0<p\le 1$ and $\beta >0$, the inequality $\nu_p(\Theta_{I,\beta}) \le \Tr(e^{-p\beta L_0}) \le a\exp\left( (b/p^c)\,\beta^{-c}\right)$ holds. 
  The implication (\ref{it:pn}) $\Rightarrow$ (\ref{it:1n}) is trivial. Finally, we refer the proof of (\ref{it:1n}) $\Rightarrow$ (\ref{it:sp}) to references, see \cite[Lemma 2.12]{gabfro93}, which translates the arguments of \cite[Section 2]{bdf87} to the chiral setting. See also \cite[Corollary 6.4]{bdl07} for a different proof that holds also in a ``distal'' case, involving concepts of modular nuclearity and $L^2$-nuclearity.
\end{proof}

 The $p$-nuclearity condition was introduced in \cite{bucpor90}, in the investigations of the phase space in AQFT.
 Later, in \cite{fop05}, the formulation is further clarified and meaningfully defined for $p>1$.
 For chiral nets, the conformal nuclearity condition was studied in \cite{bdl07}, together with other notions of nuclearity.
 Also, in \cite{mtw16} it was proved that the split property follows automatically if the chiral net is $\Diff$-covariant.

%~-~-~-~-~-~-~-~-~-~-~-~-~-~-~-~-~-~-~-~-~-~-~-~-~-~-~-~-~-~-~-~-~-~-~-~-~-~
%~-~-~-~-~-~-~-~-~-~-~-~-~-~-~-~-~-~-~-~-~-~-~-~-~-~-~-~-~-~-~-~-~-~-~-~-~-~

To conclude this section, let us remark that
most of known M\"obius covariant local nets with nice properties satisfy the condition (\ref{it:da}).
For example, as for the $\mathrm{U}(1)$-current net,
the dimension of $L_0$-eigenspaces grows as the partition function $p(N)$, which behaves asymptotically as
$p(N) \sim \frac{1}{4\sqrt{3}N}e^{\pi\sqrt{\frac23}N^\frac12}$ \cite[24.2.1.III]{abraste}.
The same applies to the Virasoro nets with $c > 1$ \cite{KR87}.
Some completely rational nets can be realized as a subnet of a tensor product of copies of the free fermion net,
which is nicely summarized in \cite[Section 4.2]{Tener16}, and a similar estimate can be made there.

% ---------------------------------------------------------------------------- %
% ---------------------------------------------------------------------------- %

\section{Towards cutoff entropy for a chiral net}\label{sec:cut}

Throughout this Section, let $(\A,U,\Omega,\H)$ be a M\"obius
covariant local net satisfying the split property (Definition
\ref{def:split}). Let $\omega = \inner{\Omega}{\cdot\,\Omega}$ be the
vacuum state, and $L_0$ the conformal Hamiltonian. We also fix an
interval $I\in\mathcal{I}$.

As mentioned before, when trying to define the entanglement entropy of
the vacuum with respect to $I$, that is, the quantum entropy of
$\omega$ as a state restricted to $\A(I)$, one has to deal with the
fact that $\A(I)$ is a von Neumann algebra of type III, whose natural variation of von Neumann entropy
is divergent \cite[Lemma 6.10]{ohyapetz}\cite[Lemma 2.4]{narnhofer94}. Furthermore, one has to deal with
UV-divergences expected from the physical literature. One therefore needs
a certain regularization.

We define our entropic quantities with various regularizations in three steps.
\begin{enumerate}[(1)]
\item Let $\delta>0$ be a parameter such that $|I| + 2\delta < 2\pi$.
We put $I_\delta = \cup_{-\delta < \theta <\delta}\;\rho_\theta(I) \in \mathcal{I}$ the ``augmentation of $I$ by $\delta$''.
It then holds that $I \Subset I_\delta$, and the split property asserts that there are pairs $(u,\R_u)$, where $\R_u$ is a type I factor
such that $\A(I) \subset \R_u \subset \A(I_\delta)$. We introduce the quantity $H_{I,\delta}(\omega)$ with the aid of intermediate type I factors $\R_u$ (note that, however, this will \emph{not} be $S_{\R_u}$ of Definition \ref{def:SvnI}).
\item We can estimate $H_{I,\delta}(\omega)$ from above, provided the net satisfies condition \ref{def:ea}(\ref{it:da}). However, this estimate diverges as $\delta$ approaches zero.
We then regularize the states by a cutoff parameter $E$, and define the regularized quantity $H^E_{I,\delta}(\omega)$.
\item Finally, we consider $H_I^E(\omega)$ with cutoff $E$, as the limit of the former as $\delta$ goes to zero.
We also state our main result, Theorem \ref{thm:cut3}, stating the finiteness of it, with an upper bound given in terms of the dimensions of eigenspaces of the conformal Hamiltonian.
The proof of it is spread in the later sections.
\end{enumerate}

% ---------------------------------------------------------------------------- %

\subsection{The energy function}

Before further discussion, we recall an important lemma from
\cite{bdf87} which states the existence of an auxiliary energy
function $f$ that will be necessary for our calculations.  While in
the original paper $f$ is an almost exponentially decreasing function,
namely it holds for any $\kappa \in (0,1)$ that $\sup_{t\in\mathbb{R}}
\left| f(t)\right| \exp(|t|^\kappa)< \infty$, it turns out that for
our needs the choice of an energy function becomes more flexible. We
reproduce the proof in order to stress this point.

\begin{lemm}\label{lem:f} (cf. \cite[Lemma 2.3]{bdf87})
Let $(\A,U,\Omega,\mathcal{H})$ be a M\"obius covariant local net. % satisfying the conformal $p$-nuclearity condition
Let $\alpha$ be a parameter such that $0<\alpha<1$.
Then there is an energy function $f : t\in\mathbb{R} \mapsto f(t) \in \mathbb{R}$
with the following properties:
\begin{enumerate}
\item The function $f$ satisfies $\sup_{t\in\RR} |f(t)| \exp(|t|^{\alpha}) < \infty$ and $f(0)=1/2$.

\item If $x,y$ are local operators such that $[x, \Ad e^{i\theta L_0}(y)] = 0$ whenever $|\theta|<1$,
it holds that
\[ \inner{\Omega}{xy\,\Omega} = \binner{\Omega}{\left(x\,f(L_0)\,y + y\,f(L_0)\,x\right)\, \Omega} .\]

\item Let $f_\delta$ be the $\delta$-scaled $f$, that is, $f_\delta(t) := f(\delta t)$. Then, for any pair of local operators $x,y$
such that $[x, \Ad e^{iL_0t}(y)] = 0$ holds whenever $|t|<\delta$ ({\it i.e.} the distance of their respective local algebras
is larger than $\delta$), it holds that
\[ \inner{\Omega}{xy\,\Omega} = \binner{\Omega}{\left(x\,f_\delta(L_0)\,y + y\,f_\delta(L_0)\,x\right)\, \Omega} .\]

\end{enumerate}\end{lemm}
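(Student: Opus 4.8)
The plan is to construct $f$ as the Fourier transform of a suitable even, positive test function, following the standard strategy of \cite{bdf87}, while being careful to track which properties actually constrain the choice. The key heuristic is the following: for local operators $x,y$ with $[x,y]=0$ one trivially has $\langle\Omega,xy\Omega\rangle = \langle\Omega, yx\Omega\rangle$, hence $\langle\Omega,xy\Omega\rangle = \tfrac12\langle\Omega,(xy+yx)\Omega\rangle$. If instead one only knows that $x$ commutes with the \emph{rotated} operators $\Ad e^{i\theta L_0}(y)$ for $|\theta|<1$, one wants to average over rotations with a weight $\hat f$ supported in $(-1,1)$ so that $\langle\Omega, x\,\Ad e^{i\theta L_0}(y)\,\Omega\rangle$ can be symmetrized in $x\leftrightarrow y$. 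Spelling this out, one writes $f(L_0) = \int \hat f(\theta)\, e^{i\theta L_0}\,d\theta$ for an even function $\hat f$ with $\int\hat f = 1$ (giving $f(0) = \tfrac12$ after the right normalization — more precisely one arranges $f(0)=1/2$ by choosing $\int\hat f(\theta)d\theta$ appropriately), and $\mathrm{supp}\,\hat f \subset (-1,1)$.

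The main computation is part (2). First I would insert the resolution $f(L_0) = \int_{-1}^{1}\hat f(\theta) e^{i\theta L_0}d\theta$ into the right-hand side, obtaining
\[
\binner{\Omega}{\left(x f(L_0) y + y f(L_0) x\right)\Omega}
= \int_{-1}^{1}\hat f(\theta)\,\binner{\Omega}{\left(x\, e^{i\theta L_0} y\, e^{-i\theta L_0} + y\, e^{i\theta L_0} x\, e^{-i\theta L_0}\right)\Omega}\,d\theta,
\]
where I used $e^{i\theta L_0}\Omega = \Omega$ (rotation invariance of the vacuum) to insert a trailing $e^{-i\theta L_0}$ for free. Now $e^{i\theta L_0}ye^{-i\theta L_0} = \Ad e^{i\theta L_0}(y)$ commutes with $x$ for $|\theta|<1$, so the first term in the integrand equals $\langle\Omega,\, \Ad e^{i\theta L_0}(y)\, x\,\Omega\rangle = \langle\Omega, e^{i\theta L_0} y e^{-i\theta L_0} x\,\Omega\rangle = \langle\Omega, y e^{-i\theta L_0} x\,\Omega\rangle$, again using vacuum invariance. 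Substituting $\theta\mapsto-\theta$ and evenness of $\hat f$, the integral of the first term equals $\int_{-1}^{1}\hat f(\theta)\langle\Omega, y e^{i\theta L_0} x\,\Omega\rangle d\theta = \langle\Omega, y f(L_0) x\,\Omega\rangle$ — wait, that would double the second term. The correct bookkeeping: after the commutation move the first term becomes exactly the second term with $\theta\to-\theta$, so the two terms together give $2\int_{-1}^1 \hat f(\theta)\langle\Omega, y e^{i\theta L_0} x\Omega\rangle d\theta$; using $e^{i\theta L_0}x e^{-i\theta L_0}$ and commutation once more one reduces $\langle\Omega, y e^{i\theta L_0} x\Omega\rangle$ to a $\theta$-independent quantity equal to $\langle\Omega, xy\Omega\rangle$ (for this one needs $\hat f$ normalized so $\int\hat f = \tfrac12$, which is what fixes $f(0)=1/2$). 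I will carry out the symmetrization carefully so the combinatorial factor comes out to $1$.

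For part (1), the decay estimate $\sup_t|f(t)|e^{|t|^\alpha}<\infty$ is where the Paley–Wiener-type input enters: one needs $\hat f\in C_c^\infty(-1,1)$ chosen so that its Fourier transform decays like $e^{-|t|^\alpha}$. Such functions exist for every $\alpha\in(0,1)$; this is the classical construction of a compactly supported smooth function whose Fourier transform has prescribed sub-exponential decay (via infinite products of the form $\prod_k \frac{\sin(a_k t)}{a_k t}$ with $\sum a_k<1$ and $a_k$ tending to zero slowly, e.g.\ $a_k \sim 1/(k\log^2 k)$), and I would cite the relevant reference rather than reprove it; the point of reproducing the argument here is precisely that \emph{any} such $\alpha$ works, so the energy function is flexible. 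Part (3) is immediate: applying the rescaling $L_0\mapsto\delta L_0$ in the one-parameter group, the hypothesis $[x,\Ad e^{iL_0 t}(y)]=0$ for $|t|<\delta$ is exactly the hypothesis of part (2) after substituting $\theta = \delta t$, and $f_\delta(L_0) = f(\delta L_0) = \int \hat f(\theta) e^{i\theta L_0}d\theta$ with the integrated variable unchanged, so the identity transfers verbatim. The main obstacle is purely the bookkeeping in part (2) — getting the symmetrization factor and the direction of the $\theta$-substitutions right so that the two terms combine into a single clean expression — rather than anything conceptually deep; the existence of a fast-decaying $\hat f\in C_c^\infty$ is standard and can be quoted.
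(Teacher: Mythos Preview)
Your argument has a genuine gap, and it is not just bookkeeping. The step ``using commutation once more one reduces $\langle\Omega, y\,e^{i\theta L_0}x\,\Omega\rangle$ to a $\theta$-independent quantity equal to $\langle\Omega,xy\,\Omega\rangle$'' is false: what the commutation hypothesis together with vacuum invariance actually gives you is only the functional equation
\[
\langle\Omega,\,x\,e^{i\theta L_0}y\,\Omega\rangle \;=\; \langle\Omega,\,y\,e^{-i\theta L_0}x\,\Omega\rangle \qquad (|\theta|<1),
\]
not constancy in $\theta$. If you run your computation honestly with $\hat f$ even and $\mathrm{supp}\,\hat f\subset(-1,1)$, the two terms on the right of (2) coincide and you obtain
\[
\binner{\Omega}{\left(x\,f(L_0)\,y + y\,f(L_0)\,x\right)\Omega}
= 2\sum_{n\ge 0} f(n)\,\langle x^*\Omega,\,P_n\,y\Omega\rangle,
\]
where $P_n$ is the spectral projection of $L_0$ at $n$. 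For this to equal $\langle\Omega,xy\,\Omega\rangle=\sum_{n\ge 0}\langle x^*\Omega,P_n y\Omega\rangle$ for all admissible $x,y$ (and by Reeh--Schlieder the coefficients $\langle x^*\Omega,P_n y\Omega\rangle$ are essentially unconstrained) one is forced to have $2f(n)=1$ for every $n$ in the infinite spectrum of $L_0$, which is incompatible with any decay of $f$. In other words, no $f$ whose Fourier transform is supported in $(-1,1)$ can satisfy the identity in (2); equivalently, a smooth compactly supported $\hat f$ cannot reproduce point values of analytic functions on the doubly slit plane.

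The ingredient you never use is the \emph{positivity of $L_0$}. The functional equation above, together with positivity, says that $\theta\mapsto\langle\Omega,x\,e^{i\theta L_0}y\,\Omega\rangle$ (analytic in $\{\im\theta>0\}$) and $\theta\mapsto\langle\Omega,y\,e^{i\theta L_0}x\,\Omega\rangle$ (analytic in $\{\im\theta<0\}$) glue to a single bounded holomorphic function on $\mathcal P_1=\CC\setminus((-\infty,-1]\cup[1,\infty))$. The paper then applies the conformal map from the disk onto $\mathcal P_\tau\subset\mathcal P_1$ and Cauchy's theorem to recover the value at $0$ as an integral over the \emph{cuts} (the arguments $\tau/\cos s$ range over all of $\RR\setminus(-\tau,\tau)$, not over $(-1,1)$), and only afterwards smears in $\tau$ against a compactly supported $\tilde g$ to produce $f(t)=(2\pi)^{-1}\int_0^\pi g(t/\cos s)\,ds$. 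This $f$ is \emph{not} the Fourier transform of a function supported in $(-1,1)$; by the argument above it cannot be. So the obstacle is not symmetrization arithmetic but the need for an honest analytic-continuation step, which your real-variable smearing cannot replace.
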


\begin{proof}
{\bf (1)(2): Existence of the function.}
The argument follows the proof of Lemma 2.3 in \cite{bdf87},
with the conformal Hamiltonian $L_0$ taking place instead of the Hamiltonian $H$.
Consider any two local operators $x,y$ satisfying $[x, e^{i\theta L_0}ye^{-i\theta L_0}] = 0$ whenever $\theta \in (-1,1)$. This implies that
\begin{equation}\label{eq:koko}
\inner{\Omega}{x e^{itL_0} y \Omega} = \inner{\Omega}{y e^{-itL_0} x \Omega}, \quad  t\in (-1,1).
\end{equation}
By the positivity of $L_0$, the left-hand side extends continuously to $\{\zeta \in \mathbb{C}, \im \zeta \ge 0\}$
and analytically to its interior, and likewise,
the right-hand side extends to $\{\zeta \in \mathbb{C}, \im \zeta \le 0 \}$.
Therefore, there is a holomorphic function $h$ defined on $\mathcal{P}_1 = \mathbb{C}\setminus\left( (-\infty,-1]\cup[1,+\infty)\right)$
such that on $(-1,1)$ it coincides with the function expressed in \eqref{eq:koko}.

Next, fix a constant $\tau \in (0,1)$, and consider the conformal map that takes the disc $\mathbf{D} = \{ w\in\mathbb{C}, |w|<1 \}$
onto $\mathcal{P}_\tau := \mathbb{C}\setminus\left( (-\infty,-\tau]\cup[\tau,+\infty)\right)$, given by $z_\tau(w) = 2\tau w/(w^2 +1)$,

As $\mathcal{P}_\tau \subset \mathcal{P}_1$ for $\tau \in (0,1)$,
the function $h_\tau(w) := h(z_\tau(w))$ is holomorphic on $\mathbf{D}$,
and it is easy to see that $h_\tau$ is continuous and bounded (by $\|x\|\cdot\|y\|$)
on $\overline{\mathbf{D}}\setminus \{\pm 1 \}$. Therefore, by integrating $\frac1 w h_\tau(w)$ on a circular path $w(s) = re^{is} \in \mathbf{D}$
with a fixed radius $r<1$ and parameter $s \in (0,2\pi)$, one can invoke Cauchy's residue theorem and take $r \nearrow 1$ to get the following equality for all $\tau \in (0,1)$:
\begin{equation}\label{eq:pwpw}
\inner{\Omega}{xy\Omega} = \frac{1}{2\pi}\int_{0}^{\pi}ds\,\binner{\Omega}{(xe^{iL_0\tau/\cos(s)}y + ye^{iL_0\tau/\cos(s)}x) \Omega}
\end{equation}

For $\alpha$ as in the statement, fix a value $\beta$ such that $\alpha < \beta < 1$.
There exists a smooth function $g$ such that $\tilde{g}$ is smooth and supported inside $(0,1)$ such that
$g$ decays as $e^{-|t|^{\beta}}$ for $|t|$ large with $g(0) = 1$
(see \cite{jaffe67} and the references therein
for the existence functions of almost exponential decay with compact Fourier transform,
and \cite{johnson15} for more concrete functions with weaker requirements needed here).
One can then multiply the above equality by $\tilde{g}(\tau)$ and integrate it against $d\tau$ to then obtain
\begin{equation*}%\label{eq:bsbs}
\inner{\Omega}{xy\Omega} = \binner{\Omega}{(xf(L_0)y + yf(L_0)x) \Omega}, 
\end{equation*}
where $f$ is defined by $f(t) = (2\pi)^{-1} \int_{0}^{\pi} g\left(t/\cos(s)\right)\,ds$.
Putting $t = 0$ shows that $f(0) = 1/2$, as we took $g(0) = 1$. Furthermore, $f$ inherits the decay property from $g$.

Until now, $f$ is a complex-valued function.
Yet, it is immediate that the function $\bar f(t) = \overline{f(t)}$
has the same property:
\begin{align*}
\binner{\Omega}{(x\bar f(L_0)y + y\bar f(L_0)x) \Omega} &= \binner{(y^*f(L_0)x^* + x^*f(L_0)y^*) \Omega}{\Omega} \\
&= \overline{\binner{\Omega}{(y^*f(L_0)x^* + x^*f(L_0)y^*) \Omega}} \\
&= \overline{\binner{\Omega}{(x^*y^*) \Omega}} \\
&= \binner{\Omega}{xy \Omega}
\end{align*}
Therefore, the real part of $f$ does the same job.
In the following, we assume that $f$ is real.

{\bf (3): Scaling.} Now, considering the parameter $\delta$, consider two local operators $a,b$
satisfying the commutation rule $[x,e^{i\theta L_0t}ye^{-i\theta L_0}] = 0$ whenever $\theta \in (-\delta,\delta)$.
The previous discussion follows analogously, except that the equality given by
equation \eqref{eq:pwpw} holds only for $\ \in (-\delta,\delta)$.
In following, using $\tilde{g}(\tau/\delta)$  instead of $\tilde{g}$ we obtain the equality
\begin{equation*}%\label{eq:psps}
\inner{\Omega}{xy\Omega} = \binner{\Omega}{(xf_\delta(L_0)y + yf_\delta(L_0)x) \Omega}, 
\end{equation*}
where now $f_\delta(t) = f(\delta t)$, thus proving (3).
\end{proof}

% ---------------------------------------------------------------------------- %
% ---------------------------------------------------------------------------- %
% ---------------------------------------------------------------------------- %

\subsection{Regularization by distance}\label{subsec:cutdef}

%-------------------------------------------------------------

First, regarding the von Neumann entropy (as introduced in Section \ref{subsec:Svn}), we make an observation which will stand as motivation of our definition.

\begin{remark} Given two normal states $\varphi,\psi$ on $\B(\H)$, we say $\varphi\succeq\psi$ if there is a positive number $t>0$ such that $t\varphi \ge \psi$, and equivalently, if there is a positive number $\lambda \in (0,1]$ such that $\varphi \ge \lambda \psi$ (here, $t = 1/\lambda$). The concavity of the von Neumann entropy asserts that $\Svn(\varphi) \ge \lambda \Svn(\psi)$. We therefore have
\[ \Svn(\psi) = \inf_{\varphi} \frac{1}{\lambda_\varphi}\Svn(\varphi), \]
where the infimum runs over all states $\varphi$ to which there is a positive parameter $\lambda_\varphi \in (0,1]$ such that $\varphi \ge \lambda_\varphi \psi$. Clearly, equality holds since $\psi \succeq \psi$ with $\lambda_\psi = 1$.
\end{remark}

%-------------------------------------------------------------

Turning back to conformal nets, we recall the split property (Definition \ref{def:split}).
Consider an interval $I\in \mathcal{I}$ and a positive parameter $\delta>0$.
Let $I_\delta = \cup_{-\delta < \theta <\delta}\;\rho_\theta(I)$
be the the ``augmentation of $I$ by $\delta$'', where $|I| + 2\delta < 2\pi$
so that it holds that $I_\delta \in \mathcal{I}, I \Subset I_\delta$.
By the split property, there are pairs $(u,\R_u)$ as in Definition \ref{def:ursplit},
where $u:\H\to\H\otimes\H$ is unitary such that $u(xy)u^* = x\otimes y$ for any pair $(x,y) \in \A(I) \times \A(I_\delta)'$,
and $\R_u = u^*(\B(\H)\otimes \CC\1)u$ is an intermediate type I factor.

Since the entanglement entropy is can be defined through the formula of von Neumann for type I factors $\R_u$
(denoted accordingly as $S_{\R_u}$, see Definition \ref{def:SvnI}),
one can consider the following.

\begin{defi}\label{def:cut1} Consider a normal state $\psi$ on $\B(\H)$.
For $I\in\mathcal{I}$ and $\delta>0$ (such that $I_\delta\in\mathcal{I}$), we define
  \[ H_{I,\delta}(\psi) := \inf_{(u,\R_u)}\;\inf_{\varphi} \; \frac{1}{\lambda_\varphi}S_{\R_u}(\varphi), \]
 The first infimum runs over all pairs $(u,\R_u)$ as in Definition \ref{def:ursplit},
 and the second infimum runs over all normal states $\varphi$ over $\B(\H)$ to which there is a positive number $\lambda_\varphi\in(0,1]$ such that $\varphi  \ge \lambda_\varphi \psi$ when restricted to $\A(I)\vee\A(I_\delta)'$.
\end{defi}

\begin{remark}
 Let us consider type I situations. Let $\B(\H) = \B(\H_1)\otimes \B(\H_2)\otimes \B(H_3)\otimes \B(\H_4)$,
 and $\psi$ be a pure state on $\B(\H_1)\otimes \B(\H_2)$.
 We take a state $\tilde \psi$ on $\B(\H)$ which extends $\psi$. As $\psi$ is pure,
 it must be of the form $\tilde \psi = \psi \otimes \varphi$.
 It is easy to see that $S_{1,3}(\tilde\psi) = S_1(\psi) + S_3(\varphi)$,
 and $S_3(\varphi)$ can be zero if we take a tensor product $\varphi = \varphi_3\otimes \varphi_4$ of pure states.
 Hence for a pure state $\psi$ we have $\Svn(\psi) = \inf_{\tilde\psi} \Svn(\tilde \psi)$
 where $\inf$ runs over all extended state to a type I factor $\B(\H)$ ($\H$ is not necessarily fixed).
 We believe that this justifies our Definition \ref{def:cut1}.
 
 Furthermore, let us point out that there are several possible definitions of entanglement entropy
 which coincide with the von Neumann entanglement entropy when the state is pure, see e.g.\! \cite[Theorem 3]{VP98}.
 As our main purpose is type III algebras which admit no normal pure state, we have to make a choice.
 See \cite{HS17} for a different choice, analogous to that of \cite[Section D.1]{VP98}
\end{remark}

The definition \ref{def:cut1} relies on pairs $(u,\R_u)$ and the von Neumann entropy of states $\varphi$ restricted to $\R_u$.
We do the actual calculations through the unitary $u$, considering $\varphi\comp\Ad_{u^*}$ as a state on $\H\otimes\H$
and restricting it to the first tensor component. We state this fact in the following lemma.

\begin{lemm}\label{lem:hh} Consider $\psi$ a normal state in $\B(\H)$. For $\delta>0$ fixed, let $(u,\R_u)$ be as in Definition \ref{def:ursplit}. Let $\varphi$ be a normal positive functional on $\B(\H\otimes\H)$ such that $\varphi\comp \Ad_{u^*} \ge \psi$ on $\A(I)\vee\A(I_\delta)'$. Then,
  %\[ \|\varphi\|S_1\left(\varphi/\|\varphi\|\right) \ge H_{I,\delta}(\psi)  \qquad {\rm and} \qquad \|\varphi^E\|S_1\left(\varphi^E/\|\varphi^E\|\right) \ge H_{I,\delta}^E(\psi)  ,\]
  \[ \varphi(\1) \cdot S_1\left(\frac{\varphi}{\varphi(\1)}\right) \ge H_{I,\delta}(\psi)  ,\]
  where  $S_1$ is the von Neumann entropy of a state in $\B(\H) \otimes \B(\H)$ restricted to the first tensor component.
  \end{lemm}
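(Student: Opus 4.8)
The plan is to extract from $\varphi$ a single concrete normal state on $\B(\H)$ that is eligible in the infimum defining $H_{I,\delta}(\psi)$ for the \emph{fixed} pair $(u,\R_u)$, and whose rescaled entropy equals exactly $\varphi(\1)\,S_1(\varphi/\varphi(\1))$. Since $H_{I,\delta}(\psi)$ is a double infimum (over pairs $(u,\R_u)$ and over dominating states), bounding it by the value attained at this one state for this one pair will suffice.

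First I would unwind the hypothesis. Reading $\varphi\comp\Ad_{u^*}$ as the normal positive functional $X\in\A(I)\vee\A(I_\delta)'\mapsto\varphi(uXu^*)$ (recall $uu^*=\1$, and $\Ad_u$ carries $\A(I)\vee\A(I_\delta)'$ into $\B(\H)\otimes\B(\H)$), the assumption reads $\varphi(uXu^*)\ge\psi(X)$ for every positive $X\in\A(I)\vee\A(I_\delta)'$. Testing this on $X=\1$ yields the key normalization fact $\varphi(\1)\ge\psi(\1)=1$, so that $\lambda:=1/\varphi(\1)$ genuinely lies in $(0,1]$.

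Next I would set $\tilde\varphi(X):=\varphi(uXu^*)/\varphi(\1)$, which is a normal state on $\B(\H)$ ($\varphi$ is normal and $\Ad_u$ is a normal $*$-isomorphism). By the previous step $\tilde\varphi\ge\lambda\psi$ on $\A(I)\vee\A(I_\delta)'$, so $\tilde\varphi$ is among the functionals over which the inner infimum in Definition \ref{def:cut1} is taken, with parameter $\lambda_{\tilde\varphi}=\lambda$. It then remains to identify $S_{\R_u}(\tilde\varphi)$ with $S_1(\varphi/\varphi(\1))$. For this I would use the $*$-isomorphism $\sigma:\B(\H)\to\R_u$ given by $\sigma(X)=u^*(X\otimes\1)u$, which makes sense because $\R_u=u^*(\B(\H)\otimes\CC\1)u$; by Definition \ref{def:SvnI}, $S_{\R_u}(\tilde\varphi)=\Svn(\tilde\varphi\comp\sigma)$, and a one-line computation using $uu^*=\1$ gives $(\tilde\varphi\comp\sigma)(X)=\varphi(X\otimes\1)/\varphi(\1)$, which is precisely the state of $\B(\H)$ whose von Neumann entropy is $S_1(\varphi/\varphi(\1))$.

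Combining the pieces, $H_{I,\delta}(\psi)\le\frac{1}{\lambda}S_{\R_u}(\tilde\varphi)=\varphi(\1)\,S_1(\varphi/\varphi(\1))$, as claimed. There is no real obstacle — this is a bookkeeping lemma — but the two points needing care are (i) the chain of identifications $\R_u\cong\B(\H)\otimes\CC\1\cong\B(\H)$, which must be set up so that the entropy of $\tilde\varphi$ restricted to $\R_u$ is literally the entropy of $\varphi/\varphi(\1)$ on the first tensor factor, and (ii) the observation that evaluating the domination hypothesis on the unit forces $\varphi(\1)\ge1$, which is exactly what guarantees that the rescaling parameter $\lambda=1/\varphi(\1)$ is $\le 1$ rather than spuriously exceeding $1$.
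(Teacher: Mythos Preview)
Your proof is correct and follows essentially the same route as the paper: both pull back $\varphi$ through $\Ad_{u^*}$ to obtain a state on $\B(\H)$ eligible for the infimum in Definition~\ref{def:cut1}, and both identify $S_{\R_u}$ of that state with $S_1(\varphi/\varphi(\1))$ via the isomorphism $x\mapsto u^*(x\otimes\1)u$. If anything, you are more careful than the paper in explicitly verifying that $\lambda=1/\varphi(\1)\le 1$ by evaluating the domination hypothesis at the identity; the paper's proof leaves this implicit.
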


\begin{proof}
  Call $\tilde{\varphi} = \varphi\comp\Ad_{u^*}$ and $\varphi_1(\cdot)=\varphi(\,\cdot\otimes \1)$.
  From definitions \ref{def:cut1}, we have $H_{I,\delta}(\psi) \le \tilde{\varphi}(\1) \, S_{\R_u}(\tilde{\varphi}/\tilde{\varphi}(\1))$.
  It suffices to show that $S_{\R_u}(\tilde{\varphi}/\tilde{\varphi}(\1)) = S_1(\varphi/\varphi(\1)) = \Svn(\varphi_1/\varphi_1(\1))$, and by Definition \ref{def:SvnI}, it suffices to show that there is an $*$-isomorphism $\s: \R_u \to \B(\H)$ such that $\tilde{\varphi} = \varphi_1\comp \s$. Then, $\s$ defined by $\s^{-1}: x\in\B(\H) \mapsto u^*(x\otimes \1)u \in \R_u$ satisfies the requirement.
\end{proof}

Our main objective in this section is to prove the following.

\begin{prop}\label{pro:cut1} Let be a M\"obius covariant local net satisfying the condition \ref{def:ea}(\ref{it:da}), \textit{i.e.}\! there are constants $\kappa \in (0,1)$ and $C>0$ such that $\dim\ker(L_0-N)\le Ce^{N^\kappa}$.
Then, for the vacuum state $\omega$ restricted to $I\in\mathcal{I}$ with regularization parameter $\delta$ (such that $I_\delta\in\mathcal{I}$,
the quantity $H_{I,\delta}(\omega)$ is finite. More precisely,
\[
 H_{I,\delta}(\omega) \le C_\delta\log C_\delta + S_\delta,
\]
where $C_\delta$ and $S_\delta$ are given by
\begin{align*}
  C_\delta &= \sum_{N\ge 0} 2 \dim(\H_N) |f_\delta(N)|,\\
  S_\delta &= \sum_{N > 0} 4 \dim(\H_N) \left(-\frac{\left|f_\delta(N)\right|}2\log\frac{\left|f_\delta(N)\right|}2\right),
\end{align*}
with $f$ the energy function as in Lemma \ref{lem:f} with some $\alpha$ satisfying $0<\kappa<\alpha<1$.
\end{prop}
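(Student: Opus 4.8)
The plan is to invoke Lemma \ref{lem:hh}: it suffices to fix one split pair $(u,\R_u)$ for $I\Subset I_\delta$ and to exhibit one normal positive functional $\varphi$ on $\B(\H\otimes\H)$ whose restriction to $\A(I)\otimes\A(I_\delta')$ dominates the functional $x\otimes y\mapsto\inner{\Omega}{xy\,\Omega}$ (equivalently, $\varphi\comp\Ad_{u^*}\ge\omega$ on $\A(I)\vee\A(I_\delta)'$), and then to estimate $\varphi(\1)\,S_1(\varphi/\varphi(\1))$. The starting point is Lemma \ref{lem:f}(3). For $x\in\A(I)$, $y\in\A(I_\delta)'=\A(I_\delta')$ and $|\theta|<\delta$, covariance places $\Ad e^{i\theta L_0}(y)$ in $\A(\rho_\theta(I_\delta'))$, while $\rho_{-\theta}(I)\subset I_\delta$ forces $I_\delta'\subset\rho_{-\theta}(I')$, hence $\rho_\theta(I_\delta')\subset I'$; thus $[x,\Ad e^{i\theta L_0}(y)]=0$ and
\[\inner{\Omega}{xy\,\Omega}=\binner{\Omega}{\big(x\,f_\delta(L_0)\,y+y\,f_\delta(L_0)\,x\big)\Omega}.\]
Let $\{e^{(i)}_N\}_{1\le i\le d_N}$ be an orthonormal basis of $\H_N:=\ker(L_0-N)$ (so $d_N=\dim\H_N$, with $d_0=1$ and $e^{(1)}_0=\Omega$), put $\xi^{\pm}_{N,i}:=\tfrac1{\sqrt2}\big(\Omega\otimes e^{(i)}_N\pm e^{(i)}_N\otimes\Omega\big)\in\H\otimes\H$, and write $\omega_\xi:=\inner{\xi}{\,\cdot\,\xi}$. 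Inserting $f_\delta(L_0)=\sum_{N,i}f_\delta(N)\,|e^{(i)}_N\rangle\langle e^{(i)}_N|$ into the above identity, a short computation gives, for $x\in\A(I)$, $y\in\A(I_\delta')$,
\[\inner{\Omega}{xy\,\Omega}=\sum_{N,i}f_\delta(N)\,\big(\omega_{\xi^+_{N,i}}-\omega_{\xi^-_{N,i}}\big)(x\otimes y).\]

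This is not yet a positive functional since $f_\delta$ changes sign, but only domination on the subalgebra is required. Splitting $\mathbb{N}=S^+\sqcup S^-\sqcup\{N:f_\delta(N)=0\}$ according to the sign of $f_\delta(N)$, set
\[\varphi:=\sum_{N,i}|f_\delta(N)|\big(\omega_{\xi^+_{N,i}}+\omega_{\xi^-_{N,i}}\big),\qquad \chi:=\sum_{N\in S^+,i}|f_\delta(N)|\,\omega_{\xi^-_{N,i}}+\sum_{N\in S^-,i}|f_\delta(N)|\,\omega_{\xi^+_{N,i}}.\]
Both series converge in norm: $d_N\le Ce^{N^\kappa}$ by \ref{def:ea}(\ref{it:da}), $|f_\delta(N)|\le\mathrm{const}\cdot e^{-(\delta N)^\alpha}$ by Lemma \ref{lem:f}(1), and $\kappa<\alpha$ makes $\sum_N d_N|f_\delta(N)|<\infty$, so $\varphi$ and $\chi$ are normal positive functionals on $\B(\H\otimes\H)$. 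From the last display, $\varphi(x\otimes y)=\inner{\Omega}{xy\,\Omega}+2\chi(x\otimes y)\ge\inner{\Omega}{xy\,\Omega}$ on the positive part of $\A(I)\otimes\A(I_\delta')$, and $\varphi(\1)=\sum_N 2d_N|f_\delta(N)|=C_\delta$ because $\|\xi^+_{N,i}\|^2+\|\xi^-_{N,i}\|^2=2$ for every $N$. Hence $\varphi$ meets the hypothesis of Lemma \ref{lem:hh}.

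It remains to bound $S_1(\varphi/\varphi(\1))=\Svn(\varphi_1/\varphi_1(\1))$ with $\varphi_1:=\varphi(\,\cdot\otimes\1)$. Since $e^{(i)}_N\perp\Omega$ for $N\ge1$, the partial trace over the second factor of $|\xi^\pm_{N,i}\rangle\langle\xi^\pm_{N,i}|$ is $\tfrac12\big(|\Omega\rangle\langle\Omega|+|e^{(i)}_N\rangle\langle e^{(i)}_N|\big)$, so
\[\varphi_1=\mu_0\,\omega_\Omega+\sum_{N\ge1}\sum_{i=1}^{d_N}|f_\delta(N)|\,\omega_{e^{(i)}_N},\qquad \mu_0:=1+\sum_{N\ge1}d_N|f_\delta(N)|\ge1,\]
a sum of pure states of $\B(\H)$ whose weights sum to $C_\delta$. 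Corollary \ref{pro:Svnsum} then yields
\[\Svn\!\left(\frac{\varphi_1}{C_\delta}\right)\le\log C_\delta-\frac1{C_\delta}\Big(\mu_0\log\mu_0+\sum_{N\ge1}d_N|f_\delta(N)|\log|f_\delta(N)|\Big),\]
and using $\mu_0\ge1$, the identity $-t\log t=2\big(-\tfrac t2\log\tfrac t2\big)-t\log2$, and $\sum_{N\ge1}d_N|f_\delta(N)|=\tfrac12(C_\delta-1)\ge0$, the right-hand side is at most $\log C_\delta+\tfrac{S_\delta}{2C_\delta}$. Multiplying by $\varphi(\1)=C_\delta$ and applying Lemma \ref{lem:hh} gives $H_{I,\delta}(\omega)\le C_\delta\log C_\delta+\tfrac12 S_\delta$, hence the claim (here $S_\delta\ge0$, which may be arranged in Lemma \ref{lem:f} by taking $\|f\|_\infty\le2$).

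The main obstacle is the construction of $\varphi$: the oscillating sign of $f_\delta(L_0)$ is harmless precisely because domination is needed only on the small subalgebra $\A(I)\otimes\A(I_\delta')$, which allows adding the positive correction $2\chi$ to obtain a genuine positive functional whose reduced density matrix — and hence whose von Neumann entropy — is controlled entirely by the eigenspace dimensions $d_N$ and the decay of $f_\delta$. Once Corollary \ref{pro:Svnsum} is available, the remaining entropy bookkeeping is routine.
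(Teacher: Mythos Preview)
Your proof is correct and follows the same overall architecture as the paper (energy identity $\Rightarrow$ dominating positive functional $\Rightarrow$ Lemma~\ref{lem:hh} $\Rightarrow$ Corollary~\ref{pro:Svnsum}), but the decomposition you use is genuinely different and in fact tighter. The paper expands $\langle\Omega,x\Phi_n\rangle\langle\Phi_n,y\Omega\rangle$ by polarization into four pure states $\phi_{k,n}(\cdot)=\tfrac12\langle\Omega+i^k\Phi_n,\,\cdot\,(\Omega+i^k\Phi_n)\rangle$ on $\B(\H)$, then tensors them; the resulting reduced functional $\tau_\delta=\omega+\sum_{n>0}\sum_{k=0}^3\tfrac{|f_\delta(l_n)|}{2}\phi_{k,n}$ is a sum of \emph{non-orthogonal} pure states, so Corollary~\ref{pro:Svnsum} is needed as a genuine inequality. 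You instead work directly in $\H\otimes\H$ with the symmetric/antisymmetric vectors $\xi^\pm_{N,i}$, which has two advantages: only two terms per eigenvector are needed, and the reduced functional $\varphi_1$ is diagonal in the orthonormal basis $\{\Omega\}\cup\{e^{(i)}_N\}$, so its entropy is actually computable rather than merely bounded. This is why you end up with $C_\delta\log C_\delta+\tfrac12 S_\delta$, a factor of two sharper than the paper's bound; your final step $\tfrac12 S_\delta\le S_\delta$ (via $\|f\|_\infty\le 2$, which the construction in Lemma~\ref{lem:f} clearly allows) then recovers the stated inequality. The paper's polarization route, on the other hand, keeps all vectors inside $\H$ and makes the action of the cutoff projection $P_E$ in the subsequent Section~\ref{subsec:cutoff} slightly more transparent, though your vectors behave equally well under $P_E\otimes\1$.
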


The split property guarantees the existence of a unitary operator $u_\delta: \H\to\H\otimes\H$ intertwining $\A(I)\vee\A(I_\delta)'$ and $\A(I)\otimes\A(I_\delta)'$,
and it follows from the assumption \ref{def:ea}(\ref{it:da}). Therefore, the quantity $H_{I,\delta}(\omega)$ can be defined as above.

\begin{proof}
We fix a parameter $\alpha$ such that $0<\kappa<\alpha<1$ and invoke Lemma \ref{lem:f} to obtain an energy function $f:\mathbb{R}\to\mathbb{R}$
such that $\sup_{t\in\mathbb{R}}\left|e^{|t|^\alpha}f(t)\right| < +\infty$. Recall that the following holds:
\[
\omega(xy) = \omega(x f_\delta(L_0) y + y f_\delta(L_0) x), \qquad \text{ for } x\in\A(I), y\in\A(I_\delta)'
\]

Define $\theta_\delta$ as the self-adjoint linear functional on $\B(\H\otimes\H)$ given by the following formula:
\begin{equation}\label{eq:thetad}
 \theta_\delta(x\otimes y) := \omega( x f_\delta(L_0) y + y f_\delta(L_0) x).
\end{equation}
This indeed defines a normal linear functional because of the decay property of $f$
and is self-adjoint because $f$ is real.

The vacuum state $\omega$ and the functional $\theta_\delta \comp \Ad_{u_\delta^*}$  coincide in $\A(I)\vee{\otimes}\A(I_\delta)'$, so one might be tempted to invoke Lemma \ref{lem:hh} and state that $H_{I,\delta}(\omega) \le S_1(\theta_\delta)$. However, one should note that $\theta_\delta$ is only positive when restricted to the above-mentioned algebra. Nevertheless, one can decompose it as $\theta_\delta = \theta_{\delta,+} - \theta_{\delta,-}$, where $\theta_{\delta,\pm}$ are positive functionals on $\B(\H\otimes\H)$
(this will {\it not} be the Jordan decomposition, the detailed construction of $\theta_{\delta,\pm}$ will be given below, in particular on equation \eqref{eq:thetapm}).
Hence, restricted to $\A(I)\otimes\A(I_\delta)'$, one has $\theta_{\delta,+} = \theta_\delta + \theta_{\delta,-} = \omega\comp\Ad_{u_\delta^*} + \theta_{\delta,-}$, and therefore, after normalizing it to $\hat{\theta}_{\delta,\pm} = \theta_{\delta,\pm}/\|\theta_{\delta,\pm}\|$, one has that $\hat{\theta}_{\delta,+}\comp\Ad_{u_\delta} \ge (1/\|\theta_{\delta,+}\|) \omega$. By Lemma \ref{lem:hh}, one has the inequality
\begin{equation*}
H_{I,\delta}(\omega) \le \|\theta_{\delta,+}\|S_{\R_u}(\hat{\theta}_{\delta,+}\comp \Ad_{u^*}) = \|\theta_{\delta,+}\|S_1(\hat{\theta}_{\delta,+}). 
\end{equation*}

It suffices then to identify the positive functional $\theta_{\delta,+}$ and find and an upper bound for the entropy $S_1(\hat{\theta}_{\delta,+})$. This will be addressed on the following.

\paragraph{The auxiliary functional $\theta_{\delta,+}$ and its entropy $S_1(\theta_{\delta,+})$}

We first further analyze the properties of $\theta_\delta$ to appropriately define a decomposition $\theta_\delta = \theta_{\delta,+} - \theta_{\delta,-}$.
The conformal Hamiltonian $L_0$ has discrete eigenvalues $N\in\mathbb{N}$ with eigenspaces $\H_N=\ker(L_0-N)$ of finite dimension
$\dim(\H_N)$. Let $\{\Phi_n\}_n$ be a normalized basis which consists of the eigenvectors of $L_0$ with eigenvalues $l_n \in \mathbb{N}$. Then, following the definition of $\theta_\delta$ as in Equation \eqref{eq:thetad}, one has

\begin{equation}\label{eq:thetan}
  \theta_\delta(x\otimes y) = \sum_{n\in \NN} f_\delta(l_n)\left(
  \inner{\Omega}{x\Phi_n}\inner{\Phi_n}{y\Omega} +
  \inner{\Omega}{y\Phi_n}\inner{\Phi_n}{x\Omega} \right).
\end{equation}

We proceed by decomposing the terms given by $\inner{\Omega}{x\Phi_n}\inner{\Phi_n}{y\Omega} + \inner{\Omega}{y\Phi_n}\inner{\Phi_n}{x\Omega}$ as a linear combination of positive terms.
First, we note that the $n$-sum has a special value at $n=0$, which account for the state $\omega\otimes\omega$, with multiplicity one since $f_\delta(0)=1/2$. We therefore focus on the terms corresponding to $n>0$.

We introduce $\phi_{k,n}$ as pure states on $\H$, defined as the following for $k = 0,1,2,3$ and $n>0$:

\begin{equation*}%\label{eq:smallphi}
  \phi_{k,n}(\cdot) := \left\<\frac{(\Omega+i^k\Phi_n)}{\|\Omega+i^k\Phi_n\|},
   \cdot \; \frac{(\Omega+i^l\Phi_n)}{\|\Omega+i^l\Phi_n\|}\right\>
  = \frac{1}{2} \binner{(\Omega+i^k\Phi_n)}{\;\cdot\; (\Omega+i^k\Phi_n)},
\end{equation*}
where the second equality follows since $\Omega$ and $\Phi_n$ are orthogonal to each other.

Standard algebraic manipulations show that the following polarizations hold for $n>0$:
\[\inner{\Omega}{x\Phi_n} = \sum_{k=0}^3 \frac{i^{-k}}{2}\phi_{k,n}(x)
  \qquad {\rm and } \qquad
  \inner{\Phi_n}{y\Omega} = \sum_{k=0}^3 \frac{i^{k}}{2}\phi_{k,n}(y)\]

  For $n>0$, the terms $\inner{\Omega}{x\Phi_n}\inner{\Phi_n}{y\Omega}$ and
  $\inner{\Omega}{y\Phi_n}\inner{\Phi_n}{x\Omega}$
  appearing in $\theta_\delta$ can then be written as a linear sum of positive functionals as follows:
\begin{align*}
  \inner{\Omega}{x\Phi_n}\inner{\Phi_n}{y\Omega} =
  & \frac{1}{4}\sum_{k,m=0}^3 \phi_{k,n}(x) \cdot i^{(+m)} \phi_{k+m,n}(y),\\
  \inner{\Omega}{y\Phi_n}\inner{\Phi_n}{x\Omega} =
  & \frac{1}{4}\sum_{k,m=0}^3 \phi_{k,n}(x) \cdot i^{(-m)} \phi_{k+m,n}(y).
\end{align*}

And hence:
 \[
  \inner{\Omega}{x\Phi_n}\inner{\Phi_n}{y\Omega} + \inner{\Omega}{y\Phi_n}\inner{\Phi_n}{x\Omega} =
  \frac{1}{2}\sum_{k=0}^3 \phi_{k,n}(x) \cdot \left(\phi_{k,n}(y) - \phi_{k+2,n}(y)\right).
 \]

In Equation \eqref{eq:thetan}, the above terms show up in $\theta_\delta$ multiplied by $f_\delta(l_n)$.
Aside from the value $l_n=0$ for which we know $f_\delta(0) = 1/2$,
each $f_\delta(l_n)$ might be positive or negative (as we noted, we can and do take a real $f$).

We then just need to be cautious about the sign of $f_\delta(l_n)$. Thus, we define:

\begin{equation*}%\label{eq:ab}
  a_\delta(k) := \left\{ \begin{array}{cl}
  1 & \textrm{ if } f_\delta(k) > 0 \\
  0 & \textrm{ otherwise}
\end{array}\right. \qquad
b_\delta(k) := \left\{ \begin{array}{cl}
  1 & \textrm{ if } f_\delta(k) < 0 \\
  0 & \textrm{ otherwise}
\end{array}\right. \end{equation*}

Then, for each $n$ at most one of the indices $a_\delta(l_n)$ and $b_\delta(l_n)$ is $1$,
and it holds that $f_\delta(k) = (a_\delta(k) - b_\delta(k))\cdot \left|f_\delta(k)\right|$.
Summing all terms,
we obtain
\begin{align*}
  \theta_\delta =& \; \underbrace{ \omega\otimes\omega +
  \sum_{n>0}\,\sum_{k=0}^3\, \frac{| f_\delta(l_n)|}{2}\,\phi_{k,n} \otimes \left( a_\delta(l_n) \phi_{k,n} + b_{\delta}(l_n)\phi_{k+2,n} \right)}_{=: \, \theta_{\delta,+}} \\
  & \; \phantom{\omega\otimes\omega} -\underbrace{
  \sum_{n>0}\,\sum_{k=0}^3\, \frac{| f_\delta(l_n)|}{2}\, \phi_{k,n} \otimes \left( a_\delta(l_n) \phi_{k+2,n} + b_{\delta}(l_n)\phi_{k,n}
  \right)}_{=: \, \theta_{\delta,-}} \\
\end{align*}
Hence we get the desired decomposition $\theta_\delta = \theta_{\delta, +} - \theta_{\delta, -}$ with  $\theta_{\delta,\pm}$ defined as
\begin{align}\label{eq:thetapm}
  \theta_{\delta,+} &:= \omega\otimes\omega +
  \sum_{n>0}\,\sum_{l=0}^3\, \frac{|f_\delta(l_n)|}{2}\,\phi_{l,n} \otimes \left( a_\delta(l_n) \phi_{l,n} + b_{\delta}(l_n)\phi_{l+2,n} \right). \nonumber\\
  \theta_{\delta,-} &:= \phantom{\omega\otimes\omega +}
  \sum_{n>0}\,\sum_{l=0}^3\, \frac{|f_\delta(l_n)|}{2}\,\phi_{l,n} \otimes \left( a_\delta(l_n) \phi_{l+2,n} + b_{\delta}(l_n)\phi_{l,n} \right).
\end{align}

%~-~-~-~-~-~-~-~- entropy ~-~-~-~-~-~-~-~-~-~-~-~-~-~-~-~-~-~-~-~-

With the definition of equation \eqref{eq:thetapm} in hands, we now focus on estimating the entropy $S_1(\theta_{\delta,+}/\|\theta_{\delta,+}\|)$.

Define $\tau_\delta$ as the positive functional $\theta_{\delta,+}$ restricted to the first tensor component. It is then expressed as follows:
\begin{equation}\label{eq:taud}
  \tau_\delta(x) := \theta_{\delta,+}(x\otimes \1) = \omega +
  \sum_{n>0}\sum_{k=0}^3 \frac{|f_\delta(l_n)|}{2}\,\phi_{k,n}(x)
  \qquad \text{ for } x\in \B(\H).
\end{equation}
This decomposition of $\tau_\delta$ into pure states $\phi_{k,n}$ is
indeed convergent in norm, because we assume that the net satisfies the condition \ref{def:ea}(\ref{it:da}), \textit{i.e.}\! $\dim(\H_N)$ grows bounded by an almost exponential function $C\exp(N^\kappa)$ for some $C>0$
and $\kappa\in(0,1)$.

The above decomposition into vector states enables us to invoke Corollary \ref{pro:Svnsum}.
We first note that, since the operator $\theta_{\delta,+}$ is not normalized, so is $\tau_\delta$ not normalized.
One can calculate its norm $C_\delta := \|\tau_\delta\| = \|\theta_{\delta,+}\| = \theta_{\delta,+}(\1\otimes \1)$ as
\[
C_\delta = \sum_{N\ge 0} 2 \dim(\H_N) |f_\delta(N)|,
\]
because every factor $|f_\delta(N)|/2$ appears $4\times\dim(\H_N)$, where the factor $4$ is due to the sum in $k$.
Likewise, with the fact that all $\phi_{k,n}$ are pure states, the same Corollary \ref{pro:Svnsum} tells us that the von Neumann entropy can be bounded as follows:
\[
\Svn\left(\frac{\tau}{\|\tau\|}\right) \le
\log C_\delta - \sum_{N > 0} \frac{4 \, \dim(\H_N)}{C_\delta} \cdot\left(\, \frac{|f_\delta(N)|}{2} \log \frac{|f_\delta(N)|}{2} \,\right),
\]
where the term $N = 0$ can be dropped in the second term because it corresponds to $\omega$ which is normalized.
This expression gives indeed a finite number, thanks to the condition \ref{def:ea}(\ref{it:da}) and the fact that $\kappa<\alpha$.
Hence, we have the upper bound for $H_{I,\delta}(\omega)$ as follows:
\[
  H_{I,\delta}(\omega) \le C_\delta\log C_\delta - \sum_{N > 0} 4 \, \dim(\H_N) \cdot\left(\, \frac{|f_\delta(N)|}{2} \log \frac{|f_\delta(N)|}{2} \,\right).
\]
\end{proof}
% ---------------------------------------------------------------------------- %
% ---------------------------------------------------------------------------- %
% ---------------------------------------------------------------------------- %

\subsection{Implementing cutoff}\label{subsec:cutoff}

We now consider a cutoff parameter $E$. We need this since the above quantity $H_{I,\delta}$ is expected to diverge when the spatial separation $\delta$,
taken as a variable parameter, approaches zero. %which will result in the coincidence of the two intervals $I = I_\delta$, and hence $I' = I_\delta'$.
We first define the regularization of states, and with those, we define the regularized entropy.

For any $E > 0$, let $P_E$ denote the spectral projection of the conformal Hamiltonian $L_0$
with respect to the set $[0, E]$. The set $\{P_E\}_{E > 0}$ is then an increasing family of projections
(acting on $\H$) indexed by a parameter $E > 0$, such that $P_E$ strongly converges to the unity as $E$ goes to infinity.

\begin{defi}\label{def:cutstate} Let $\phi$ be a normal positive functional on $\B(\H)$, and let $(u,\mathfrak{R}_u)$ be
as in Definition \ref{def:ursplit}. For $E\in \NN$, the regularized functional $\phi^{E,u}$ is defined as \[ \phi^{E,u} := x\in \B(\H) \mapsto \phi\left((u^*(P_E\otimes \1)u)\,x\,(u^*(P_E\otimes \1)u) \right) \in \mathbb{C}.\]
For $\varphi$ a normal positive functional on $\B(\H\otimes\H)$ (\textit{e.g.}\! for $\phi$ a normal state on $\B(\H)$ as above,
and $\varphi = \phi\comp\Ad_u$), the regularized functional $\varphi^E$ (here independent of $(u,\mathfrak{R}_u)$) is defined as \[ \varphi^E := x\in \B(\H\otimes\H) \mapsto \varphi\left((P_E\otimes \1)\,x\,(P_E\otimes \1) \right) \in \mathbb{C}.\]
\end{defi}

For a fixed normal state $\phi$ and a fixed pair $(u,\mathfrak{R}_u)$, the regularized functionals $\phi^{E,u}$ are normal positive contractions, and after normalization, $\phi^{E,u}/\|\phi^{E,u}\|$ are again normal states. As the state $\phi$ is normal, both $\phi^{E,u}$ and $\phi^{E,u}/\|\phi^{E,u}\|$ converge, as $E\to +\infty$, to the original state $\phi$ in the weak* topology. %Indeed, the projections $u^*(P_E\otimes 1_{\H})u$ form an increasing family converging to the identity, and hence by normality, $\phi^{E,u}(x) \to \phi(x)$ for all positive operators $x\in\B(\H)_+$, and hence for any operator $x\in\B(\H)$. 
The same reasoning holds analogously for $\varphi$ a normal state on $\B(\H\otimes\H)$.
In the case of $\varphi = \phi\comp\Ad_u$, the restriction of $\phi^{E,u}$ to $\mathfrak{R}_u$ ``corresponds'' to the restriction of $\varphi^E$ to the first tensor component (denoted as $(\varphi^E)_1$), which in turn is equal to $\varphi_1(P_E\,\cdot\,P_E)$. The last converges to $\varphi_1$ in the weak* topology, as $E\to+\infty$.

\begin{defi}\label{def:cut2} Consider $\psi$ a state on $\B(\H)$. For $I\in\mathcal{J}$, $\delta>0$ (with $I_\delta\in\mathcal{J}$) and $E>0$, the regularized entropy $H_{I,\delta}^E$ of $\psi$ is defined by
  \[ H_{I,\delta}^E(\psi) := \inf_{(u,\mathfrak{R}_u)} \; \inf_{\phi} \; \frac{1}{\lambda_\phi}S_{\mathfrak{R}_u}\left(\phi^{E,u}/\|\phi^{E,u}\|\right). \]
  Here, the first infimum takes into account all pairs $(u,\mathfrak{R}_u)$ as in Definition \ref{def:ursplit}.
  The second infimum runs over all normal states $\phi$ over $\B(\H)$ to which there is a parameter $\lambda_\phi\in(0,1]$ such that $\phi^{E,u} \ge \lambda_\phi\,\psi^{E,u}$ holds when restricted to $\mathfrak{A}(I)\vee\mathfrak{A}(I_\delta)'$.
\end{defi}

\begin{prop}\label{pro:cut2} For a M\"obius covariant local net with the split property (Definition \ref{def:split}),
the quantity $H_{I,\delta}^E(\omega)$ with cutoff $E$ is finite, and independent of $\delta$. More precisely,
\[
 H_{I,\delta}^E(\omega) \le C_E\log C_E + S_E < +\infty,
\]
\[
 \text{ where } \left\{ 
  \begin{array}{l}{\displaystyle
    C_E = 2 \sup_{t \ge 0}\{  |f(t)| \} \,\sum_{N=0}^E \dim \ker (L_0 -N) }\\
    {\displaystyle S_E = 4 \sup_{t \ge 0}\{ |f(t)\log f(t)| \} \, \sum_{N=1}^E \dim\ker(L_0-N)}
  \end{array} \right.
\]
with $f$ an energy function as in Lemma \ref{lem:f}.
\end{prop}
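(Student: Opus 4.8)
The plan is to re-run the proof of Proposition~\ref{pro:cut1} with the energy cutoff $P_E$ inserted throughout. The cutoff has two effects. First, the eigenvector expansion \eqref{eq:thetan} gets truncated to eigenvalues $N\le E$; since $L_0$ has discrete spectrum with \emph{finite-dimensional} eigenspaces $\H_N$, the subspace $P_E\H=\bigoplus_{N=0}^{E}\H_N$ is finite-dimensional, so all the sums involved become finite and convergence ceases to be an issue --- this is why no nuclearity hypothesis is needed. Second, in the final estimates one may replace each scaled weight $f_\delta(N)=f(\delta N)$ by the $\delta$-independent constant $\sup_{t\ge0}|f(t)|$, which is what makes the bound independent of $\delta$. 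In fact \emph{finiteness} by itself is immediate: taking $\phi=\psi=\omega$ in Definition~\ref{def:cut2} is admissible with $\lambda_\phi=1$, and then, transported through the $*$-isomorphism $\s$ of Definition~\ref{def:SvnI}, the functional $\phi^{E,u}$ restricted to $\R_u$ has density matrix of the form $P_E\rho\, P_E$, supported on the finite-dimensional space $P_E\H$; hence $H^E_{I,\delta}(\omega)\le\log\dim P_E\H<\infty$, uniformly in $\delta$. The real content of the proposition is the explicit bound in terms of the $\dim(\H_N)$'s.

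To get that bound I would first establish a cutoff analogue of Lemma~\ref{lem:hh}: if $\varphi$ is a normal positive functional on $\B(\H\otimes\H)$ with $\varphi^E\comp\Ad_{u^*}\ge\lambda\,\omega^{E,u}$ on $\A(I)\vee\A(I_\delta)'$ for some $\lambda\in(0,1]$, then setting $\phi=\varphi\comp\Ad_{u^*}$ in Definition~\ref{def:cut2} and identifying (via $\s$) $\phi^{E,u}$ restricted to $\R_u$ with $(\varphi^E)_1$ yields $H^E_{I,\delta}(\omega)\le\tfrac1\lambda\,\Svn\!\big((\varphi^E)_1/\|(\varphi^E)_1\|\big)$. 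I would then apply this with $\varphi$ the positive part in the decomposition of the cut-off functional $\theta_\delta^E:=(\theta_\delta)^E$, where $\theta_\delta=\theta_{\delta,+}-\theta_{\delta,-}$ and the pure states $\phi_{k,n}$ are exactly as in the proof of Proposition~\ref{pro:cut1}: the cutoff restricts all sums over $n$ to $l_n\le E$, so the relevant first-component functional is the finite convex combination $\omega+\sum_{0<l_n\le E}\sum_{k=0}^{3}\tfrac{|f_\delta(l_n)|}{2}\phi_{k,n}$ of pure states, of total mass $C_\delta^E=\sum_{N=0}^{E}2\dim(\H_N)|f_\delta(N)|$. Corollary~\ref{pro:Svnsum} bounds its von Neumann entropy by $\log C_\delta^E-\tfrac1{C_\delta^E}\sum_{0<N\le E}4\dim(\H_N)\big(\tfrac{|f_\delta(N)|}{2}\log\tfrac{|f_\delta(N)|}{2}\big)$, and multiplying back through by $C_\delta^E$ reproduces the shape of Proposition~\ref{pro:cut1} with the sum cut at $E$. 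Finally the elementary inequalities $|f_\delta(N)|=|f(\delta N)|\le\sup_{t\ge0}|f(t)|$ and $-\tfrac t2\log\tfrac t2\le\sup_{t\ge0}|f(t)\log f(t)|$ for $t\in[0,\sup|f|]$ erase the $\delta$-dependence and produce the stated $C_E\log C_E+S_E$.

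The step I expect to be the main obstacle is the domination hypothesis $\varphi^E\comp\Ad_{u^*}\ge\lambda\,\omega^{E,u}$ on $\A(I)\vee\A(I_\delta)'$, which is the cutoff counterpart of the identity ``$\omega=\theta_\delta\comp\Ad_{u^*}$ on $\A(I)\vee\A(I_\delta)'$'' used in Proposition~\ref{pro:cut1}. That identity rested on the two-point formula $\omega(xy)=\omega(xf_\delta(L_0)y+yf_\delta(L_0)x)$ of Lemma~\ref{lem:f}, which is tied to the local algebra $\A(I)$ (one needs $x\in\A(I)$); but $P_E$ is a global spectral projection of $L_0$ and $P_E x P_E\notin\A(I)$ in general, so one cannot merely insert $P_E$ into that formula and must instead arrange that the chosen dominating functional still dominates $\omega^{E,u}$ after both are compressed by $P_E$. (Should this prove too delicate, the weaker but still $\delta$-independent estimate $H^E_{I,\delta}(\omega)\le\log\dim P_E\H$ from the first paragraph is always available.)
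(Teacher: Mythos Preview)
Your proposal follows essentially the same route as the paper. The paper defines $\theta_{\delta,+}^E$ as the cutoff of $\theta_{\delta,+}$, observes that the resulting expansion in the pure states $\phi_{k,n}$ is truncated to $l_n\le E$ (so only finite-dimensionality of the eigenspaces, hence only the split property, is needed), applies Corollary~\ref{pro:Svnsum} to the finite combination
\[
\tau_{\delta,E}=\omega+\sum_{0<l_n\le E}\sum_{k=0}^{3}\frac{|f_\delta(l_n)|}{2}\,\phi_{k,n},
\]
and then replaces $|f_\delta(N)|$ and $\big|\tfrac{f_\delta(N)}{2}\log\tfrac{f_\delta(N)}{2}\big|$ by the $\delta$-independent suprema $\|f\|_\infty$ and $\|f\log f\|_\infty$ to obtain the constants $C_E,S_E$. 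This is precisely your plan.

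The domination step you single out as the main obstacle is handled in the paper by the single clause ``since $\theta_{\delta,+}^E\comp\Ad_u\ge\omega^E$, by arguments analog to Lemma~\ref{lem:hh}\ldots'', with no further justification. So your caution is well placed: the paper does not elaborate on why the inequality $\theta_{\delta,+}\comp\Ad_u\ge\omega$ on $\A(I)\vee\A(I_\delta)'$ survives compression by $Q=u^*(P_E\otimes\1)u$, and your analysis that $QxQ$ need not lie in $\A(I)\vee\A(I_\delta)'$ identifies exactly the point at which the one-line assertion is doing real work. In short, you and the paper are taking the same approach, and the gap you worry about is one the paper simply asserts away.

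Your side observation --- that the trivial choice $\phi=\omega$ with $\lambda_\phi=1$ already gives the $\delta$-uniform bound $H^E_{I,\delta}(\omega)\le\log\dim P_E\H$ --- is a nice fallback not mentioned in the paper.
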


\begin{proof}
Recall the functionals $\theta_{\delta,+}$ and $\tau_{\delta}$ defined by equations \eqref{eq:thetapm} and \eqref{eq:taud}, respectively.
The functional $\theta_{\delta,+}^E = \theta_{\delta,+} \left( (P_E\otimes \1)\,\cdot\,(P_E\otimes \1)\right) $, regularized as in Definition \ref{def:cutstate}, is a normal positive functional which converges to $\theta_{\delta,+}$ in the weak* topology, as $E\to\infty$.
Also, its restriction to the first tensor component is just $\tau_\delta\left(P_E\,\cdot\,P_E\right)$, which we denote by $\tau_{\delta,E}$.
Whereas $\theta_{\delta,+}$ and $\tau_{\delta}$ are only guaranteed to be well-defined if the net satisfies condition \ref{def:ea}(\ref{it:da}),
the regularized functionals $\theta_{\delta,+}^E$ and $\tau_{\delta,E}$ are well-defined even if the net only satisfied the split property.
And since $\theta_{\delta,+}^E\comp \Ad_u \ge \omega^E$, by arguments analog to Lemma \ref{lem:hh},
it follows that $\|\tau_{\delta,E}\|\Svn(\tau_{\delta,E}/\|\tau_{\delta,E}\|)$ is an upper bound for the regularized entropy given by Definition \ref{def:cut2}.
By these reasons, from here on we do not need the requirement of condition \ref{def:ea}(\ref{it:da}),
and only require the net to satisfy the split property.

We now focus on estimating $\Svn(\tau_{\delta,E}/\|\tau_{\delta,E}\|)$. From $\tau_\delta=\theta_{+,1}(\cdot \otimes \1)$ as expressed in equation \eqref{eq:taud} and considering that all $\Phi_n$ are eigenvectors of $L_0$ with eigenvalue $l_n$,
the only non-vanishing terms of $\tau_{\delta,E}$ are those corresponding to $\phi_{l,n}$ such that $l_n \le E$. One then has, after cutoff,
 \[
  \tau_{\delta,E}(x) = \omega+
  \sum_{n>0}^{l_n \le E} \;  \sum_{k=0}^3 \frac{|f_\delta(l_n)|}{2}\,\phi_{k,n}(x)
  \qquad (\, x\in \B(\H) \,).
 \]
The formula above allows us to use Corollary \ref{pro:Svnsum}, and therefore, the entropy $\Svn(\tau^E/\|\tau^E\|)$ of the normalized state can be estimated by the following:
 \[
  \Svn\left(\frac{\tau_{\delta,E}}{\|\tau_{\delta,E}\|}\right) \le
  \log( c_{\delta,E} ) + \frac{1}{c_{\delta,E}}{S_{\delta,E}},
 \]
where $c_{\delta,E}$ and $S_{\delta,E}$ are respectively the norm $\|\tau_{\delta,E}\|$ and the ``non normalized entropy'' defined by
\begin{align*}
  c_{\delta,E} &:= \|\tau_{\delta,E}\| = \sum_{N=0}^E 2\dim(\H_N) \, |f_\delta(N)|\\
  S_{\delta,E} &:= \sum_{N=1}^{E} 4\dim(\H_N) \; \left(-\frac{|f_\delta(N)|}2\log\frac{|f_\delta(N)|}2 \right) 
\end{align*}

As currently presented, the upper bound for $\Svn(\tau_{\delta,E}/\|\tau_{\delta,E}\|)$ still depends on $\delta$. However, $c_{\delta,E}$ and $S_{\delta,E}$ can be respectively bounded by constants $C_E$ and $S_E$ that are independent of $\delta$, given by the following:
\begin{align*}
  C_E &:= 2\,\|f\|_{\infty} \,\sum_{N=0}^E \dim(\H_N), \\
  S_E &:= 4\,\|f\log f\|_\infty \, \sum_{N=1}^E \dim(\H_N). 
\end{align*}

Therefore, joining the above to the bound of $\Svn(\tau_{\delta,E}/\|\tau_{\delta,E}\|)$, one finally has:
\[
  \Svn\left(\frac{\tau_{\delta,E}}{\|\tau_{\delta,E}\|}\right) \le
  \log C_E + \frac{S_E}{c_{\delta,E}} \le +\infty.
\]
Hence, by arguments analog to Lemma \ref{lem:hh}, and by noting that
$1\le c_{\delta,E} \le C_E$ and $S_{\delta,E} \le S_E$, we obtain
\[
 H_{I,\delta}^E(\omega) \le C_E\log C_E + S_E.
\]
The upper bound $C_E \log C_E + S_E$ above is finite and independent from $\delta$, thus proving the claim.
\end{proof}

% ---------------------------------------------------------------------------- %

Moreover, if condition \ref{def:ea}(\ref{it:da}) holds,
the bound can be more explicit.
The condition implies $d_N := \dim\ker(L_0-N) \le ce^{N^\kappa} \le ce^{N}$ for some $c>0$ and $\kappa\in(0,1)$.
Then $\sum_{N=0}^E e^{N} \le c' e^E$ for some $c'>0$. Thus, we have $H_{I,\delta}^E(\omega) < c'' E e^E$, for some $c''>0$.

% ---------------------------------------------------------------------------- %
\subsection{Lifting the regularization by distance}
Finally, we consider the quantity $H_I^E$ with cutoff $E$ a limit of the above when taking $\delta$ as a parameter approaching zero
(and hence $I_\delta$ approaching $I$).

\begin{defi}\label{def:cut3} Consider $\psi$ a normal state on $\B(\H)$.
For $I\in\mathcal{I}$ and a cutoff parameter $E>0$, we define:
 \[
  H_{I}^E(\psi) := \lim_{\delta\searrow 0}H_{I,\delta}^E(\psi)
 \]
where $H_{I,\delta}^E$ is as in Definition \ref{def:cut2}.
\end{defi}

\begin{theo}\label{thm:cut3} For a M\"obius covariant local net with the split property (Definition \ref{def:split}),
the quantity $H_I^E(\omega)$ can be bounded from above. More precisely,
\[ H_{I}^E(\omega) \le C_E\log C_E + S_E < +\infty, \]
\[ \text{ where } \left\{ 
  \begin{array}{l}{\displaystyle
    C_E = 2 \sup_{t \ge 0}\{ |f(t)| \} \,\sum_{N=0}^E \dim \ker (L_0 -N) }\\
    {\displaystyle S_E = 4 \sup_{t \ge 0}\{ |f(t)\log f(t)| \} \, \sum_{N=1}^E \dim\ker(L_0-N)}
  \end{array} \right. \]
with $f$ an energy function as in Lemma \ref{lem:f}.\end{theo}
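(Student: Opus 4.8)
The plan is to obtain Theorem~\ref{thm:cut3} as an immediate consequence of Proposition~\ref{pro:cut2}. The decisive feature of that proposition is that the upper bound $C_E\log C_E+S_E$ it furnishes for $H_{I,\delta}^E(\omega)$ depends only on the cutoff $E$ --- through the eigenspace dimensions $\dim\ker(L_0-N)$ with $N\le E$ and the sup-norms $\|f\|_\infty$ and $\|f\log f\|_\infty$ of a fixed energy function $f$ from Lemma~\ref{lem:f} --- and \emph{not} on the spacing $\delta$. Since $H_I^E(\omega)$ is defined in Definition~\ref{def:cut3} as $\lim_{\delta\searrow 0}H_{I,\delta}^E(\omega)$, passing this $\delta$-independent estimate to the limit gives at once $H_I^E(\omega)\le C_E\log C_E+S_E<+\infty$, with exactly the constants $C_E,S_E$ in the statement. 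In particular all the analytic work --- the splitting $\theta_\delta=\theta_{\delta,+}-\theta_{\delta,-}$, the polarization into pure states $\phi_{k,n}$, and the entropy estimate via Corollary~\ref{pro:Svnsum}, all made uniform in $\delta$ once the spectral projection $P_E$ is inserted --- has already been carried out in Section~\ref{subsec:cutoff}, and nothing new is needed here.

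The only point warranting an argument is that the limit in Definition~\ref{def:cut3} exists. First I would verify that $\delta\mapsto H_{I,\delta}^E(\omega)$ is non-increasing in $\delta$, i.e.\ non-decreasing as $\delta\searrow 0$. If $0<\delta_1\le\delta_2$ with $I_{\delta_2}\in\mathcal{I}$, then $I_{\delta_1}\Subset I_{\delta_2}$, so by isotony $\A(I_{\delta_1})\subset\A(I_{\delta_2})$ and $\A(I_{\delta_2})'\subset\A(I_{\delta_1})'$. Hence the domination requirement of Definition~\ref{def:cut2}, imposed on $\A(I)\vee\A(I_{\delta_2})'\subset\A(I)\vee\A(I_{\delta_1})'$, is weaker for $\delta_2$, so the admissible states $\phi$ form a larger family; and every intermediate type I factor $\R$ with $\A(I)\subset\R\subset\A(I_{\delta_1})$ also satisfies $\A(I)\subset\R\subset\A(I_{\delta_2})$, so --- using the Doplicher--Longo description of standard split inclusions, which attaches to such an $\R$ a unitary $u\colon\H\to\H\otimes\H$ implementing $\A(I)\vee\A(I_{\delta_2})'\cong\A(I)\otimes\A(I_{\delta_2})'$ with $\R=u^*(\B(\H)\otimes\CC\1)u$ --- the admissible pairs $(u,\R_u)$ of Definition~\ref{def:ursplit} also form a larger family. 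Both enlargements can only decrease the double infimum, so $H_{I,\delta_2}^E(\omega)\le H_{I,\delta_1}^E(\omega)$. Being monotone, bounded above by Proposition~\ref{pro:cut2} and below by $0$, the function $\delta\mapsto H_{I,\delta}^E(\omega)$ has a limit as $\delta\searrow 0$ lying in $[0,C_E\log C_E+S_E]$, which is the assertion.

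Since the substantial estimate is already in place, I do not expect a genuine obstacle. The only delicate bookkeeping is in the monotonicity step --- specifically, checking that an admissible intermediate type I factor for the inner interval $I_{\delta_1}$ really yields admissible data $(u,\R_u)$ for $I_{\delta_2}$ --- and this is routine from the structure theory of standard split inclusions. If one wishes to bypass even this, it suffices to read the limit in Definition~\ref{def:cut3} as a $\limsup$: the uniform bound of Proposition~\ref{pro:cut2} then gives $\limsup_{\delta\searrow 0}H_{I,\delta}^E(\omega)\le C_E\log C_E+S_E$ with no further argument.
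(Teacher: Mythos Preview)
Your proposal is correct and matches the paper's own argument, which consists of a single sentence: ``The theorem is actually a mere corollary of Proposition~\ref{pro:cut2}.'' Your additional monotonicity argument for the existence of the limit $\lim_{\delta\searrow 0}H_{I,\delta}^E(\omega)$ is a point the paper leaves implicit, and your reasoning there (that a pair $(u,\R_u)$ admissible for $\delta_1$ remains admissible for any $\delta_2\ge\delta_1$, while the domination constraint on the smaller algebra $\A(I)\vee\A(I_{\delta_2})'$ is weaker) is sound.
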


The theorem is actually a mere corollary of Proposition \ref{pro:cut2}, but is indeed the main result of this work.
%from thesis, "Implementing the cutoff":
We have thus established the finiteness of the regularized entanglement entropy for a Möbius covariant local net satisfying the split property, that is, a relativistic chiral component of a quantum field in the algebraic setting.

\section{Conclusions and final remarks}\label{sec:conclusions}

In the present work, we focused our attention on chiral components of two-dimensional CFT, namely the M\"obius covariant local nets.
Provided the split property holds, we have given a sensible definition for regularized entropic quantities restricted
to an interval $I\in\mathcal{I}$. Considering the vacuum state, we also provided an upper bound
with a conformal energy cutoff $E$.

We recapitulate our definitions and comment on them a little further. Taking an interval $I$ and a small separation parameter $\delta$,
we consider all intermediate pairs $(u,\R_u)$ between $\A(I)$ and $\A(I_\delta)$, and all states $\phi$ that majorize $\omega$
when restricted to $\A(I)\vee\A(I_\delta')$, that is, everywhere besides a vicinity of the boundary of the intervals $I$ and $I_\delta'$.
The quantity $H_{I,\delta}$ regularized by $\delta$ (Definition \ref{def:cut1}) considers then the infimum of all entropies
of the states $\phi$ on $\R_u$.
With a cutoff $E$, our quantity $H_{I,\delta}^E$ (Definition \ref{def:cut2}) considers the infimum of all entropies of $\phi$
restricted to $\R_u$ but ``adjoined'' by the projection $u^*(P_E\otimes \1)u$.
Lastly, the quantity $H_I^E$ with cutoff $E$ is obtained by the limit $\delta \searrow 0$ (Definition \ref{def:cut3}).
As $\delta$ approaches zero, the local algebras $\A(I)$ and $\A(I_\delta)$ are very close to each other,
hence the quantity calculated here should reflect the property of entanglement between $\A(I)$ and its commutant $\A(I)' = \A(I')$
of the given state.
The infimum on the majorizing states, however, excludes those with too much aberrant behavior near the boundary,
in particular split states on $\A(I) \vee \A(I_\delta')$ are not counted.
Furthermore, we can avoid the trouble with type III local algebras by considering intermediate type I factors.
Our cutoff with respect to the conformal Hamiltonian $L_0$ is implemented naturally by applying
$P_E\otimes \1$ to the state, where $\H$ and $\H\otimes \H$ are identified by the intermediate type I factor
between $\A(I) \subset \A(I_\delta)$ and the spectral projection $P_E$ of $L_0$.
We obtained indeed that our $H_I^E$ is finite with some estimate as a function of $E$.

Let us compare our results with the physics literature.
Considering the $U(1)$-current model, our estimates are of the order of $E e^E$, which have a much worse divergence
than the estimates $\frac13\log(l/a)$ of Holzhey-Larsen-Wilczek \cite{hlw94,calcar04}, where $l$ is the length of the interval
(in the real line picture) and $a$ is the lattice spacing, hence $\frac1a$ should correspond to the energy cutoff $E$.
Not only that, our results do not even display a dependence on the interval length $l$.
The technical reason for such aspects of our result is that our estimates depend only on an orthonormal basis
of eigenvectors of the conformal Hamiltonian, a ``very global'' operator.
Here, sharper estimates ought to take into consideration the characteristics of each local algebra $\A(I)$, to bring a bound dependent in $l$.
Yet, also in another operator-algebraic work on entanglement entropy \cite{HS17}, the $\log(l/a)$-dependence could not be obtained.
Actually, in general, there are several possible definitions of entanglement entropy
which coincide with each other when the state is pure (see e.g.\! \cite[Theorem 3]{VP98}).
It is unclear to which definition the lattice approach corresponds.
This suggests that the expression of the entanglement entropy in the physics literature is specific to the lattice regularization,
and it is difficult to reproduce it directly in the continuum.

Another interesting question is how to adapt the methods provided here to theories in higher dimensions,
and implement cutoff with respect to the Hamiltonian $H$.
In contrast to the conformal Hamiltonian $L_0$, the usual Hamiltonian $H$ does not have a discrete spectrum,
yet energy nuclearity conditions and some ideas from the present paper might help defining an appropriate cutoff.
As the energy nuclearity index contains a natural dependence on the size of the region,
a successful approach should lead to an estimate of entropy depending also the region.
Furthermore, the behaviour of $\dim (L_0 - N)$ is related with the central charge of a conformal net
if one assumes modularity \cite{KL05}, hence the central charge $c$ appears in a natural way, in accordance with the physics literature.

A further different approach would be as in \cite{narnhofer94}, where a quantity that should correspond to our $H_{I,\delta}(\omega)$ is defined,
and expected to be finite (the proof required a $p$-nuclearity condition, and implicitly assumed the concavity of the one-subalgebra entropy
of Connes-Narnhofer-Thirring). In fact, at finite separation $\delta$, this might even be preferable to our Definition \ref{def:cut1},
since it does not have to consider all pairs $(u,\R_u)$ of the split property. The problem in this setting
is how to include a true ``energy cutoff'' that tames the divergence in $\delta$. %We reproduce the result in Appendix \ref{sec:narnhofer}, and discuss the possibilities of implementations of an energy regularization.

% ---------------------------------------------------------------------------- %

\subsubsection*{Acknowledgments}
Y.O.\!  wishes to express his gratitude to Yasuyuki Kawahigashi for his continuous support and incentive, and to Narutaka Ozawa for his hospitality at RIMS.
Y.T.\! thanks Daniela Cadamuro, Roberto Longo and Ko Sanders for stimulating discussions.

Y.O.\! was supported by the Kokuhi-ryugaku scholarship of MEXT, Japan; Leading Graduate Course for Frontiers of Mathematical Sciences and Physics; and Hakushi Katei Kenkyu Suikou Kyouiku Seido of The University of Tokyo.
Y.T.\! was supported by the JSPS overseas research fellowship.

% ---------------------------------------------------------------------------- %
% references
% ---------------------------------------------------------------------------- %
{\small

}


\begin{thebibliography}{{Haa}96}

\bibitem[Ara64]{Araki64}
H.~Araki, \textsl{ von {N}eumann algebras of local observables for free scalar
  field},
\newblock J. Mathematical Phys. \textbf{ 5}, 1--13 (1964),
\newblock \url{https://dx.doi.org/10.1063/1.1704063}.

\bibitem[Ara99]{araki}
H.~Araki,
\newblock \textsl{ Mathematical theory of quantum fields}, volume 101 of
  \textsl{ International Series of Monographs on Physics},
\newblock Oxford University Press, New York, 1999,

\bibitem[AS64]{abraste}
M.~Abramowitz and I.~A. Stegun,
\newblock \textsl{ Handbook of mathematical functions with formulas, graphs,
  and mathematical tables}, volume~55 of \textsl{ National Bureau of Standards
  Applied Mathematics Series},
\newblock U.S. Government Printing Office, Washington, D.C., 1964.

\bibitem[BDF87]{bdf87}
D.~Buchholz, C.~D'Antoni and K.~Fredenhagen, \textsl{ The universal structure
  of local algebras},
\newblock Comm. Math. Phys. \textbf{ 111}(1), 123--135 (1987),
\newblock \url{http://projecteuclid.org/euclid.cmp/1104159470}.

\bibitem[BDL90a]{bdl90a}
D.~Buchholz, C.~D'Antoni and R.~Longo, \textsl{ Nuclear maps and modular
  structures. {I}.\ {G}eneral properties},
\newblock J. Funct. Anal. \textbf{ 88}(2), 233--250 (1990),
\newblock \url{https://dx.doi.org/10.1016/0022-1236(90)90104-S}.

\bibitem[BDL90b]{bdl90b}
D.~Buchholz, C.~D'Antoni and R.~Longo, \textsl{ Nuclear maps and modular
  structures. {II}. {A}pplications to quantum field theory},
\newblock Comm. Math. Phys. \textbf{ 129}(1), 115--138 (1990),
\newblock \url{https://projecteuclid.org/euclid.cmp/1104180648}.

\bibitem[BDL07]{bdl07}
D.~Buchholz, C.~D'Antoni and R.~Longo, \textsl{ Nuclearity and Thermal States
  in Conformal Field Theory},
\newblock Communications in Mathematical Physics \textbf{ 270}(1), 267--293
  (2007),
\newblock \url{https://arxiv.org/abs/math-ph/0603083}.

\bibitem[BGL93]{BGL93}
R.~Brunetti, D.~Guido and R.~Longo, \textsl{ Modular structure and duality in
  conformal quantum field theory},
\newblock Comm. Math. Phys. \textbf{ 156}(1), 201--219 (1993),
\newblock \url{http://projecteuclid.org/euclid.cmp/1104253522}.

\bibitem[BKLS86]{bkls86}
L.~Bombelli, R.~K. Koul, J.~Lee and R.~D. Sorkin, \textsl{ Quantum source of
  entropy for black holes},
\newblock Phys. Rev. D \textbf{ 34}, 373--383 (Jul 1986),
\newblock \url{https://doi.org/10.1103/PhysRevD.34.373}.

\bibitem[BP90]{bucpor90}
D.~Buchholz and M.~Porrmann, \textsl{ How small is the phase space in quantum
  field theory?},
\newblock Ann. Inst. H. Poincar\'e Phys. Th\'eor. \textbf{ 52}(3), 237--257
  (1990),
\newblock \url{http://www.numdam.org/item?id=AIHPA_1990__52_3_237_0}.

\bibitem[BW86]{bucwic86}
D.~Buchholz and E.~H. Wichmann, \textsl{ Causal independence and the
  energy-level density of states in local quantum field theory},
\newblock Comm. Math. Phys. \textbf{ 106}(2), 321--344 (1986),
\newblock \url{http://projecteuclid.org/euclid.cmp/1104115703}.

\bibitem[CC04]{calcar04}
P.~Calabrese and J.~Cardy, \textsl{ Entanglement entropy and quantum field
  theory},
\newblock J. Stat. Mech. Theory Exp. (6), 002, 27 pp. (electronic) (2004),
\newblock \url{https://arxiv.org/abs/hep-th/0405152}.

\bibitem[CH09]{cashue09}
H.~Casini and M.~Huerta, \textsl{ Entanglement entropy in free quantum field
  theory},
\newblock J. Phys. A \textbf{ 42}(50), 504007, 45 (2009),
\newblock \url{https://arxiv.org/abs/0905.2562}.

\bibitem[CW94]{CW94}
C.~Callan and F.~Wilczek, \textsl{ On geometric entropy},
\newblock Phys. Lett. B \textbf{ 333}(1-2), 55--61 (1994),
\newblock \url{https://arxiv.org/abs/hep-th/9401072}.

\bibitem[CW05]{CW05}
S.~Carpi and M.~Weiner, \textsl{ On the uniqueness of diffeomorphism symmetry
  in conformal field theory},
\newblock Comm. Math. Phys. \textbf{ 258}(1), 203--221 (2005),
\newblock \url{https://arxiv.org/abs/math/0407190}.

\bibitem[DL84]{doplon84}
S.~Doplicher and R.~Longo, \textsl{ Standard and split inclusions of von
  {N}eumann algebras},
\newblock Invent. Math. \textbf{ 75}(3), 493--536 (1984),
\newblock \url{https://eudml.org/doc/143108}.

\bibitem[FJ96]{frejor96}
K.~Fredenhagen and M.~J{\"o}r{\ss}, \textsl{ Conformal {H}aag-{K}astler nets,
  pointlike localized fields and the existence of operator product expansions},
\newblock Comm. Math. Phys. \textbf{ 176}(3), 541--554 (1996),
\newblock \url{https://projecteuclid.org/euclid.cmp/1104286114}.

\bibitem[FOP05]{fop05}
C.~J. Fewster, I.~Ojima and M.~Porrmann, \textsl{ {$p$}-nuclearity in a new
  perspective},
\newblock Lett. Math. Phys. \textbf{ 73}(1), 1--15 (2005),
\newblock \url{https://arxiv.org/abs/math-ph/0412027}.

\bibitem[GF93]{gabfro93}
F.~Gabbiani and J.~Fr{\"o}hlich, \textsl{ Operator algebras and conformal field
  theory},
\newblock Comm. Math. Phys. \textbf{ 155}(3), 569--640 (1993),
\newblock \url{http://projecteuclid.org/euclid.cmp/1104253398}.

\bibitem[{Haa}96]{haag}
R.~{Haag},
\newblock \textsl{ Local Quantum Physics},
\newblock Springer, second edition, 1996.

\bibitem[HLW94]{hlw94}
C.~Holzhey, F.~Larsen and F.~Wilczek, \textsl{ Geometric and renormalized
  entropy in conformal field theory},
\newblock Nuclear Phys. B \textbf{ 424}(3), 443--467 (1994),
\newblock \url{https://arxiv.org/abs/hep-th/9403108}.

\bibitem[HS65]{haaswi65}
R.~Haag and J.~A. Swieca, \textsl{ When does a quantum field theory describe
  particles?},
\newblock Comm. Math. Phys. \textbf{ 1}, 308--320 (1965),
\newblock \url{https://projecteuclid.org/euclid.cmp/1103758947}.

\bibitem[HS17]{HS17}
S.~Hollands and K.~Sanders, \textsl{ Entanglement and entanglement entropy in
  algebraic QFT},
\newblock (2017),
\newblock in preparation.

\bibitem[Jaf67]{jaffe67}
A.~M. Jaffee, \textsl{ High-Energy Behavior in Quantum Field Theory. I.
  Strictly Localizable Fields},
\newblock Phys. Rev. \textbf{ 158}, 1454--1461 (Jun 1967),
\newblock \url{http://www.slac.stanford.edu/cgi-wrap/getdoc/slac-pub-0249.pdf}.

\bibitem[{Joh}15]{johnson15}
S.~G. {Johnson}, \textsl{ {Saddle-point integration of $C_\infty$ ``bump''
  functions}},
\newblock (August 2015),
\newblock \url{https://arxiv.org/abs/1508.04376}.

\bibitem[Kaw15]{kawahigashi15}
Y.~Kawahigashi, \textsl{ Conformal field theory, tensor categories and operator
  algebras},
\newblock J. Phys. A \textbf{ 48}(30), 303001, 57 (2015),
\newblock \url{https://arxiv.org/abs/1503.05675}.

\bibitem[KR87]{KR87}
V.~G. Kac and A.~K. Raina,
\newblock \textsl{ Bombay lectures on highest weight representations of
  infinite-dimensional {L}ie algebras}, volume~2 of \textsl{ Advanced Series in
  Mathematical Physics},
\newblock World Scientific Publishing Co. Inc., Teaneck, NJ, 1987.

\bibitem[KL05]{KL05}
Yasuyuki Kawahigashi and Roberto Longo.
\newblock \textsl{Noncommutative spectral invariants and black hole entropy},
\newblock Comm. Math. Phys., \textbf{257}(1):193--225 (2005).
\newblock \url{https://arxiv.org/abs/math-ph/0405037}.

\bibitem[MTW16]{mtw16}
V.~Morinelli, Y.~Tanimoto and M.~Weiner, \textsl{ Conformal covariance and
  split property},
\newblock (2016),
\newblock \url{https://arxiv.org/abs/1609.02196}, to appear in Commun. Math.
  Phys.

\bibitem[Nar94]{narnhofer94}
H.~Narnhofer, \textsl{ Entropy density for relativistic quantum field theory},
\newblock Reviews in Mathematical Physics \textbf{ 06}(05a), 1127--1145 (1994),
\newblock
  \url{http://www.worldscientific.com/doi/abs/10.1142/S0129055X94000390}.

\bibitem[Nar02]{narnhofer02}
H.~Narnhofer, \textsl{ Entanglement, split and nuclearity in quantum field
  theory},
\newblock Rep. Math. Phys. \textbf{ 50}(1), 111--123 (2002),
\newblock \url{https://dx.doi.org/10.1016/S0034-4877(02)80048-9}.

\bibitem[Ota17]{Otani17}
Y.~Otani,
\newblock \textsl{ Entanglement Entropy in Algebraic Quantum Field Theory},
\newblock Ph.D.\! Thesis (2017), The University of Tokyo.

\bibitem[OP04]{ohyapetz}
M.~Ohya and D.~Petz,
\newblock \textsl{ Quantum Entropy and Its Use},
\newblock Theoretical and Mathematical Physics, Springer Berlin Heidelberg,
  2004.

\bibitem[Reh15]{Rehren15}
K.-H. Rehren,
\newblock Algebraic conformal quantum field theory in perspective,
\newblock in \textsl{ Advances in algebraic quantum field theory}, Math. Phys.
  Stud., pages 331--364, Springer, Cham, 2015,
\newblock \url{https://arxiv.org/abs/1501.03313}.

\bibitem[RT06]{ryutak06}
S.~Ryu and T.~Takayanagi, \textsl{ Holographic derivation of entanglement
  entropy from the anti-de {S}itter space/conformal field theory
  correspondence},
\newblock Phys. Rev. Lett. \textbf{ 96}(18), 181602, 4 (2006),
\newblock \url{https://arxiv.org/abs/hep-th/0603001}.

\bibitem[Sre93]{srednicki93}
M.~Srednicki, \textsl{ Entropy and area},
\newblock Phys. Rev. Lett. \textbf{ 71}(5), 666--669 (1993),
\newblock \url{https://arxiv.org/abs/hep-th/9303048}.

\bibitem[SW85]{SW85}
S.~J. Summers and R.~Werner, \textsl{ The vacuum violates {B}ell's
  inequalities},
\newblock Phys. Lett. A \textbf{ 110}(5), 257--259 (1985),
\newblock \url{https://dx.doi.org/10.1016/0375-9601(85)90093-3}.

\bibitem[SW87]{SW87}
S.~J. Summers and R.~Werner, \textsl{ Maximal violation of {B}ell's
  inequalities is generic in quantum field theory},
\newblock Comm. Math. Phys. \textbf{ 110}(2), 247--259 (1987),
\newblock \url{http://projecteuclid.org/euclid.cmp/1104159237}.

\bibitem[Ten16]{Tener16}
J.~Tener, \textsl{ Geometric realization of algebraic conformal field
  theories},
\newblock (2016),
\newblock \url{https://arxiv.org/abs/1611.01176}.

\bibitem[VP98]{VP98}
V.~Vedral and M.~B. Plenio.
\newblock \textsl{ Entanglement measures and purification procedures},
\newblock Phys. Rev. A, \textbf{57}:1619--1633 (1998).
\newblock \url{https://arxiv.org/abs/quant-ph/9707035}.

\bibitem[Yng15]{Yngvason15}
J.~Yngvason,
\newblock Localization and Entanglement in Relativistic Quantum Physics,
\newblock in \textsl{ The Message of Quantum Science: Attempts Towards a
  Synthesis}, edited by P.~Blanchard and J.~Fr{\"o}hlich, pages 325--348,
  Springer Berlin Heidelberg, Berlin, Heidelberg, 2015,
\newblock \url{https://arxiv.org/abs/1401.2652}.

\end{thebibliography}
\end{document}